\newtheorem{thm}{Theorem}
\newtheorem{remark}{Remark}
\newtheorem{lemma}{Lemma}
\newtheorem{prob}{Problem}
\newtheorem{hyp}{Assumption}
\DeclareMathOperator{\var}{var}
\DeclareMathOperator{\cov}{cov}
\DeclareMathOperator{\corr}{cor}
\DeclareMathOperator{\sign}{sgn}
\DeclareMathOperator{\diag}{diag}
\newcommand{\bsx}{\boldsymbol{x}}
\newcommand{\bsy}{\boldsymbol{y}}
\newcommand{\bsX}{\boldsymbol{X}}
\newcommand{\bsxi}{\boldsymbol{\xi}}
\newcommand{\bschi}{\boldsymbol{\chi}}
\newcommand{\bsv}{\boldsymbol{v}}
\newcommand{\bsw}{\boldsymbol{w}}
\newcommand{\R}{\mathbb R}
\newcommand{\Real}{\mathrm{Re}}
\title{\LARGE \bf
The correlated variability control problem: a dominant approach*
}
\author{Marcela Ordorica Arango$^{1}$ and Alessio Franci$^{1}$
\thanks{*This work was supported by UNAM-DGAPA PAPIIT grant n. 102420, and Conacyt grant n. CB-A1-S-10610.}
\thanks{$^{1}$Math Department, National Autonomous University of Mexico, 04510, Mexico City, Mexico
        {\tt\small marcela.ordorica@gmail.com}, {\tt\small afranci@ciencias.unam.mx}}
}
\begin{document}

\maketitle
\thispagestyle{empty}
\pagestyle{empty}

\begin{abstract}
Given a population of interconnected input-output agents repeatedly exposed to independent random inputs, we talk of correlated variability when agents' outputs are variable (i.e., they change randomly at each input repetition) but correlated (i.e., they do not vary independently across input repetitions). Correlated variability appears at multiple levels in neuronal systems, from the molecular level of protein expression to the electrical level of neuronal excitability, but its functions and origins are still debated.
Motivated by advancing our understanding of correlated variability, we introduce the (linear) {\it correlated variability control problem} as the problem of controlling steady-state correlations in a linear dynamical network in which agents receive independent random inputs. Although simple, the chosen setting reveals important connections between network structure, in particular, the existence and the dimension of dominant (i.e., slow) dynamics in the network, and the emergence of correlated variability.
\end{abstract}

\section{INTRODUCTION}
Correlated variability is ubiquitous in neuronal systems. Ion channel expression in a population of homogeneous neurons exhibits correlated variability: two neurons of the same type can express very different densities of ion channels but the way in which ion channel density varies across neurons is correlated~\cite{schulz2007quantitative,tran2019ionic}. The role of correlated variability at the molecular level of ion channel expression is debated but it is thought to help finding multiple solutions to the same neural design problem. We recently suggested that correlated variability in ion channel expressions emerges from the dynamical properties of an underlying molecular regulatory networks~\cite{franci2020positive}.

Correlated variability is also observed in the electrical activity of neurons in response to incoming stimuli. When a same stimulus is repeatedly presented to a neuronal population, the intensity of neural response varies across stimulus repetitions but variability in neuronal responses is correlated between neurons~\cite{kohn2016correlations}. Correlated variability is known to shape information coding capabilities of large neuronal populations~\cite{abbott1999effect} but a number of other functions have been explored like the modulation of working memory~\cite{leavitt2017correlated} and the formation of neural assemblies through synaptic plasticity~\cite{averbeck2006neural}. The origins of correlated variability in the electrical activity of neuronal populations is debated~\cite{kohn2016correlations}, but recurrent connections seem to play a fundamental role~\cite{pernice2018interpretation}.

To the best of our knowledge, the problem of controlling correlated variability has never been tackled from a control-theoretical perspective. Here, we give a first step toward addressing this problem in a linear control setting by considering a network of recurrently interconnected scalar agents under the effect of independent random inputs. The main results we prove reveal that the existence of a solution to the correlated variability control problem and the dimension of the solution set are tightly linked to the existence of a dominant (slow) eigenvalue of the network dynamics and, more precisely, to the algebraic and geometric multiplicity of this eigenvalue. Our results provide a first methodology to translate our understanding of correlated variability in biological neural systems into engineered neuromorphic controlled system and artificial neural networks.

\section{NOTATION AND PRELIMINARIES}
Given\footnote{We refer the reader to~\cite{durrett2019probability} for details about measure theory and probability} a probability space $(\Omega, \mathcal{F},P)$, a \textit{random variable} is a measurable function $X:\Omega\to\mathbb{R}$.
A \textit{normally distributed random variable} $X$ is a random variable whose probability density function is given by $f(x)=\frac{1}{\sigma \sqrt{2\pi}}e^{-\frac{1}{2}(\frac{x-\mu}{\sigma})^2}$, where  $\sigma>0$, $\mu\in \mathbb{R}$ is called the \textit{mean} of $X$ and $\sigma^2$ is its \textit{variance}.
Two random variables $X$ and $Y$ are said to be \textit{independent} if, for all Borel sets $C$ and $D$, $P(X\in C,Y\in D)=P(X\in C)P(Y\in D)$.
$\mathcal N_{n}(0,1)$ denotes the space of vectors of $n$ normally distributed independent random variables with zero mean and unitary variance. When it exists, the \textit{expected value} of a random variable $X$ is defined as $\mathbb{E}[X]=\int_{\Omega} XdP$.
The \textit{covariance} between two random variables $X$ and $Y$ is $\cov(X,Y)=\mathbb{E}[XY]-\mathbb{E}[X]\mathbb{E}[Y]$. If $X$ and $Y$ are independent, then $\cov(X,Y)=0$. The covariance function is bilinear, i.e., for $a,b,c,d\in\R$, $\cov(aX_1+bX_2,cX_3+dX_4)=ac\cov(X_1,X_3)+ad\cov(X_1,X_4)+bc\cov(X_2,X_3)+bd\cov(X_2,X_4)$. The \textit{variance} of a random variable $X$ is $\var(X)=\cov(X,X)\geq 0$. The \textit{correlation coefficient} between the random variables $X$ and $Y$ such that $\var(X),\var(Y)>0$ is $\corr(X,Y)=\frac{\cov(X,Y)}{\sqrt{\var(X)\var(Y)}}$. Given a vector of $n$ random variables $\bsX=(X_1,X_2,\ldots,X_n)$, the \textit{covariance matrix} of $\bsX$ is the matrix $\Sigma$ with entries $\Sigma_{ij}=[\cov(X_i,X_j)]$. For $n=2$, the covariance matrix defines a {\it covariance ellipse}, which is the ellipse whose axis are the eigenvectors of $\Sigma$ and whose axis lengths are the square root of the respective eigenvalues.
The symbol $\delta_{ij}$ denotes \textit{Kronecker's delta}: $\delta_{ij}=0$ if $i\neq j$ and $\delta_{ij}=1$ if $i=j$. For any matrix $A\in \mathbb{R}^{n\times n}$ and an eigenvalue $\lambda$ of $A$, $\mu_A(\lambda)$ denotes the \textit{algebraic multiplicity} of $\lambda$ and $\gamma_A(\lambda)\leq \mu_A(\lambda)$ its \textit{geometric multiplicity}. When $\gamma_A(\lambda) < \mu_A(\lambda)$, $A$ does not admit a base of eigenvectors, in which case we resort to generalized eigenvectors and the associated Jordan's canonical form.
An eigenvalue $\lambda_1$ of $A$ is \textit{dominant} if $\lambda_1$ is real and all other eigenvalues of $A$ $\lambda_2,\ldots, \lambda_n$ are such that $\Real(\lambda_i)\leq\lambda_1$. A \textit{dominant eigenvector} is an eigenvector associated to a dominant eigenvalue. A vector is \textit{positive} if all its entries are positive. $\diag(d_1,\ldots,d_n)$ denotes the diagonal matrix with entries $d_1,\ldots,d_n$. A \textit{Metzler matrix} is a matrix such that all its off-diagonal entries are nonnegative. A matrix is \textit{reducible} if it's similar to a matrix of he form 
$\begin{bmatrix}
M_1&M_2\\0&M_3
\end{bmatrix}$. If a matrix is not reducible, it's \textit{irreducible}. Finally, a matrix is \textit{Hurwitz} if all its eigenvalues have strictly negative real part. $\R\{\bsv_1,\ldots,\bsv_m\}$ denotes the subspace spanned spanned by $\bsv_1,\ldots,\bsv_m$. Given two vectors $\bsx,\bsy\in\R^n$, $\langle\bsx,\bsy\rangle=\sum_{i=1}^n x_i,y_i$ denotes the standard Euclidean scalar product between them.

\section{THE CORRELATED VARIABILITY CONTROL PROBLEM AND PRELIMINARY RESULTS}

Consider the following random linear control system
    \begin{equation}
    \label{eq:main_system}
         \dot \bsx = A\bsx + \bsxi\,,
    \end{equation}
where $\bsx=[x_i]_{i=1}^n\in\R^n$ is the state, $A=[a_{ij}]_{i,j=1}^n\in\R^{n\times n}$ is Hurwitz, and $\bsxi=[\xi_i]_{i=1}^n\in \mathcal N_{n}(0,1)$ are random inputs. We interpret $A$ as a weighted signed adjacency matrix, i.e., $a_{ij}$, $i\neq j$, determines the network interaction between variable $x_i$ and variable $x_j$. If $a_{ij}=0$, there are no direct network interactions between $x_i$ and $x_j$, whereas if $a_{ij}>0$ ($<0$) the interaction between the two variables is excitatory (inhibitory). Diagonal terms $a_{ii}$ model internal dynamics of variable $x_i$. The exponentially stable equilibrium $\bschi=[\chi_k]_{k=1}^n$ of model~\eqref{eq:main_system} satisfies $\bschi=-A^{-1}\bsxi$. Thus, $\bschi$ is also a vector of random variables. We talk of a {\it random equilibrium}. 

\vspace{1mm}
\begin{prob}[Correlated variability control]\label{prob:main}
Given $c_{i,j}$, $i,j\in\{1,\ldots,n\}$, $i\neq j$, such that $-1\leq c_{i,j}=c_{j,i}\leq 1$, find a Hurwitz matrix $A$ such that the random equilibrium $\bschi$ satisfies $\corr(\chi_i,\chi_j)=c_{i,j}$.
\end{prob}
\vspace{1mm}

The goal of this paper is to determine necessary conditions on the network structure defined by $A$ such that Problem~\ref{prob:main} admits a solution, at least for some choices of the desired correlations $c_{i,j}$, and to determine the geometry of the solution set, in case it is not empty.\footnote{Of course, Problem~\ref{prob:main} could be solved computationally using brute force. Indeed, $\chi_k=\sum_{l=1}^n [A^{-1}]_{kl}\xi_l$ and therefore $\var(\chi_k)=\sum_{l=1}^n ([A^{-1}]_{kl})^2$ and $\cov(\chi_i,\chi_j)=\sum_{l=1}^n [A^{-1}]_{il}[A^{-1}]_{jl}$, where we used bilinearity of the covariance function and the fact that $\var(\xi_k)=1$ and $\cov(\xi_i,\xi_j)=0$. If follows that
$
\corr(\chi_i,\chi_j)=\frac{\sum_{l=1}^n [A^{-1}]_{il}[A^{-1}]_{jl}}{\sqrt{\sum_{l=1}^n ([A^{-1}]_{il})^2\sum_{l=1}^n ([A^{-1}]_{jl})^2}}
$,
which, in the context of Problem~\ref{prob:main}, leads to an intricate implicit system of equations for the elements of $A$ that might or might not admit a solution. In either case, such a brute force approach is not informative about which network structures, as determined by $A$, lead to a solution for Problem~\ref{prob:main}, or what the geometry of the solution set look like.} In the following two sections, we illustrate two extreme cases leading respectively to null and full (anti)correlations. In both cases, Problem~\ref{prob:main} has no solution except for very specific choices of the desired correlations.

\subsection{Null correlation in the absence of network interactions}

\begin{thm}\label{thm:no corr}
Suppose $A$ is diagonal. Then , for all $i,j\in \{1,\ldots,n\}$, $i\neq j$, $\corr(\chi_i,\chi_j)=0$ and therefore Problem~\ref{prob:main} is unsolvable whenever $c_{i,j}\neq 0$.
\end{thm}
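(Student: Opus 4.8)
The plan is to compute the covariance matrix of $\bschi$ explicitly using the formula already provided in the footnote, namely $\cov(\chi_i,\chi_j)=\sum_{l=1}^n [A^{-1}]_{il}[A^{-1}]_{jl}$, and to exploit the fact that when $A$ is diagonal its inverse is diagonal as well. First I would observe that if $A=\diag(a_{11},\ldots,a_{nn})$ with all $a_{kk}\neq 0$ (which holds because $A$ is Hurwitz, hence invertible), then $A^{-1}=\diag(a_{11}^{-1},\ldots,a_{nn}^{-1})$, so $[A^{-1}]_{kl}=\delta_{kl}/a_{kk}$. Substituting into the covariance formula, for $i\neq j$ every term $[A^{-1}]_{il}[A^{-1}]_{jl}=\delta_{il}\delta_{jl}/(a_{ii}a_{jj})$ vanishes since $l$ cannot equal both $i$ and $j$, giving $\cov(\chi_i,\chi_j)=0$. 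Meanwhile $\var(\chi_k)=\sum_{l=1}^n([A^{-1}]_{kl})^2=1/a_{kk}^2>0$, so the correlation coefficient is well defined and equals $0$.

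Alternatively, and perhaps more transparently, I would argue directly at the level of the random equilibrium: since $\bschi=-A^{-1}\bsxi$ and $A^{-1}$ is diagonal, each component satisfies $\chi_k=-\xi_k/a_{kk}$, i.e., $\chi_k$ is just a deterministic rescaling of the single input $\xi_k$. For $i\neq j$, $\chi_i$ and $\chi_j$ are then scalar multiples of $\xi_i$ and $\xi_j$ respectively, which are independent by hypothesis ($\bsxi\in\mathcal N_n(0,1)$), hence $\cov(\chi_i,\chi_j)=\frac{1}{a_{ii}a_{jj}}\cov(\xi_i,\xi_j)=0$ by bilinearity of the covariance and independence. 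Since $\var(\chi_k)=a_{kk}^{-2}\var(\xi_k)=a_{kk}^{-2}>0$, the correlation $\corr(\chi_i,\chi_j)=\cov(\chi_i,\chi_j)/\sqrt{\var(\chi_i)\var(\chi_j)}$ is defined and equals zero for every pair $i\neq j$.

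The conclusion about Problem~\ref{prob:main} is then immediate: if some prescribed $c_{i,j}\neq 0$, no diagonal Hurwitz $A$ can achieve $\corr(\chi_i,\chi_j)=c_{i,j}$, because every diagonal $A$ forces this correlation to be $0$. I would state this as the final line of the proof without further elaboration.

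Honestly, there is no real obstacle here — the statement is essentially a sanity check establishing the baseline "no interactions $\Rightarrow$ no correlation." The only point requiring a modicum of care is confirming that the variances are strictly positive so that the correlation coefficient is well defined (otherwise the claim $\corr=0$ would be vacuous or ill-posed); this follows at once from $A$ being Hurwitz, which guarantees $a_{kk}\neq 0$ for each $k$. I would keep the proof to a few lines and favor the direct argument via $\chi_k=-\xi_k/a_{kk}$ over the index-heavy inverse-matrix computation, since it makes the mechanism (each equilibrium component depends on exactly one independent input) completely evident.
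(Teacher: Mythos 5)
Your proposal is correct, and the direct argument you favor (each $\chi_k=-\xi_k/a_{kk}$ is a deterministic rescaling of a single independent input, so bilinearity of covariance gives $\cov(\chi_i,\chi_j)=\frac{1}{a_{ii}a_{jj}}\cov(\xi_i,\xi_j)=0$ for $i\neq j$) is exactly the proof the paper gives. Your added remark that Hurwitzness guarantees $a_{kk}\neq 0$ so the variances are strictly positive and the correlation is well defined is a small but worthwhile point the paper leaves implicit.
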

\begin{proof}
Observe that $\chi_i=-\frac{\xi_i}{a_{ii}}$.  By bilinearity of the covariance function, it follows that
$\cov(\chi_i,\chi_j)=\frac{1}{a_{ii}a_{jj}}\cov(\xi_i,\xi_j)=\frac{1}{a_{ii}a_{jj}}\delta_{ij}$.
On the other hand, $\var(\chi_i)=\frac{1}{a_{ii}^2}\var(\xi_i)=\frac{1}{a_{ii}^2}$. Hence,
$\corr(\chi_i,\chi_j)=\frac{\cov(\chi_i,\chi_j)}{\sqrt{\var(\chi_i)\var(\chi_j)}}=\delta_{ij}$
and the result follows.
\end{proof}

Theorem~\ref{thm:no corr} proves the intuitive result that in the absence of network interactions (i.e., $a_{ij}=0$ if $i\neq j$) and in the presence of uncorrelated inputs, the network states are also uncorrelated at equilibrium. Figure~\ref{fig: corr diag A} illustrates this result by simulating one thousand instances of model~\eqref{eq:main_system} with $A=\diag(-0.54,-0.28,-0.14)$. The resulting equilibrium cloud appears as 3-dimensional ellipse whose axis are parallel to the coordinate axis. Projecting equilibria on the three coordinate planes, the measured correlations are (close to, due to finite sample size) zero. 

\begin{figure}
  \begin{subfigure}[b]{.21\textwidth}
    \centering
    \includegraphics[width=\linewidth]{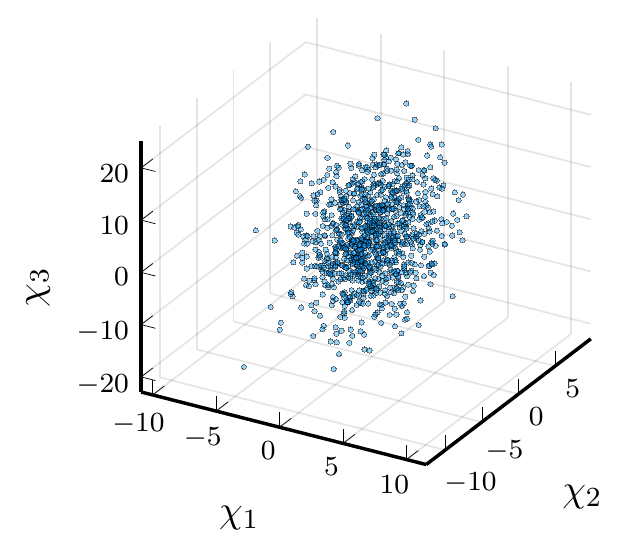}
  \end{subfigure}
  \hfill
  \begin{subfigure}[b]{.21\textwidth}
    \centering
    \includegraphics[width=\linewidth]{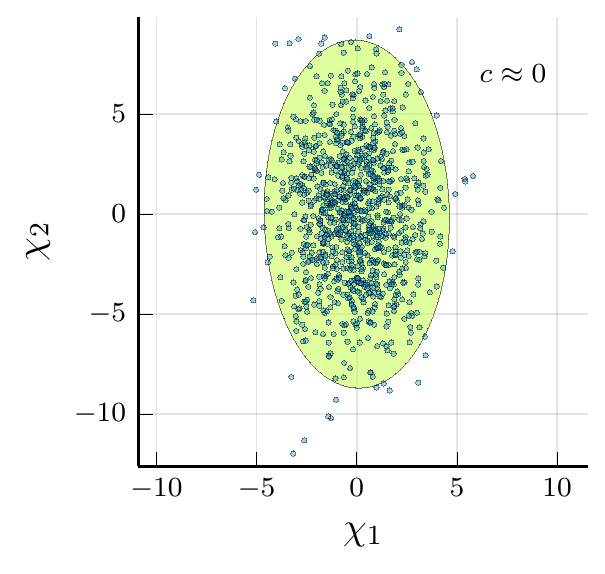}
  \end{subfigure}
  \medskip
  \begin{subfigure}[b]{.21\textwidth}
    \centering
    \includegraphics[width=\linewidth]{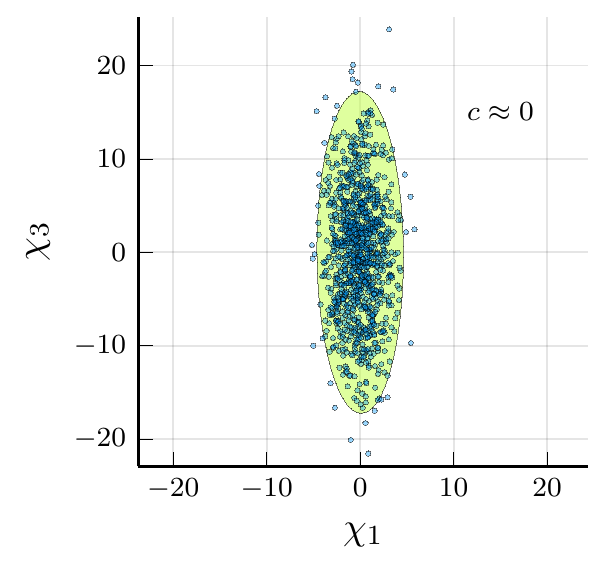}
  \end{subfigure}
  \hfill
  \begin{subfigure}[b]{.21\textwidth}
    \centering
    \includegraphics[width=\linewidth]{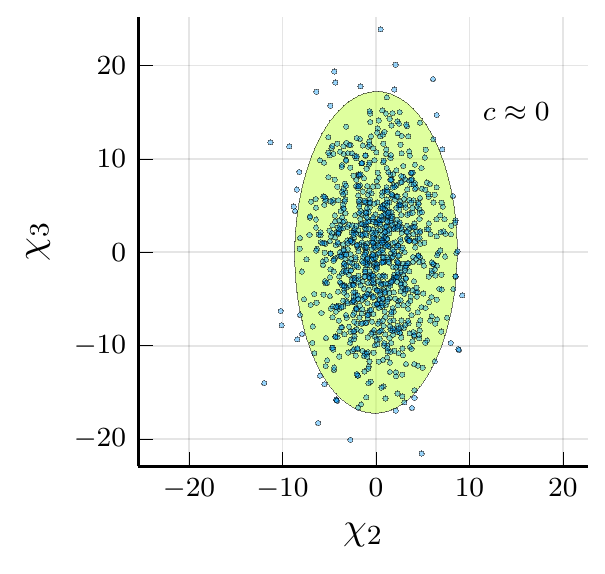}
  \end{subfigure}
\caption{Steady state solution for one thousand instances of model~\eqref{eq:main_system} with $A=\diag(-0.54,-0.28,-0.14)$. The computed correlations are close to zero as predicted. Yellow ellipses are covariance ellipses.
}
\label{fig: corr diag A}
\end{figure}

\subsection{Full (anti)correlations in singular 1-dominant networks}

\begin{thm}\label{thm: full corr}
Suppose $A$ has a positive dominant eigenvector $\bsv_1$ and associated eigenvalue $\lambda_1<0$. Let $\lambda_2,
\ldots,\lambda_n$ be the remaining eigenvalues of $A$ satisfying $\Real(\lambda_i)<\lambda_1$, $i=2,\ldots,n$. Then, in the singular limit $\varepsilon=\frac{-1}{\max_{i>1}\{\Real(\lambda_i)\}}\to0$ and fixed $\lambda_1$, $\corr(\chi_i,\chi_j)=1$ for all $i,j\in\{1,\ldots,n\}$ and therefore Problem~\ref{prob:main} is unsolvable whenever $c_{i,j}\neq 1$.
\end{thm}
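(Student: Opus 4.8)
The plan is to exploit that $\bschi=-A^{-1}\bsxi$ with $\bsxi\in\mathcal N_n(0,1)$, so that (rewriting the bilinearity computation of the footnote in matrix form) the covariance matrix of $\bschi$ is $\Sigma=A^{-1}A^{-T}$ and $\corr(\chi_i,\chi_j)=\Sigma_{ij}/\sqrt{\Sigma_{ii}\Sigma_{jj}}$. Everything then reduces to understanding $A^{-1}$ in the singular limit, and the claim I would try to establish is that $A^{-1}$ collapses onto its dominant rank-one part $\tfrac{1}{\lambda_1}P_1=\tfrac{1}{\lambda_1}\bsv_1\bsw_1^T$, where $P_1$ is the spectral projector onto the (one-dimensional, since $\lambda_1$ is simple) dominant eigenspace and $\bsw_1$ is the left eigenvector associated to $\lambda_1$ normalized by $\langle\bsw_1,\bsv_1\rangle=1$.

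To prove the rank-one collapse I would split $\R^n=\R\{\bsv_1\}\oplus W$ with $W$ the complementary $A$-invariant subspace (the range of $I-P_1$), so that in an adapted basis $A=\diag(\lambda_1,A_W)$ with $A_W$ carrying the eigenvalues $\lambda_2,\dots,\lambda_n$, whose real parts are at most $-1/\varepsilon$. Then $A^{-1}=\diag(\lambda_1^{-1},A_W^{-1})$, and the heart of the matter is to show $A_W^{-1}\to0$ as $\varepsilon\to0$ with $\lambda_1$ fixed; reading the singular limit as a scaling of the non-dominant block ($A_W=\varepsilon^{-1}\hat A_W$ for a fixed Hurwitz $\hat A_W$ with $\max\Real(\text{spec})=-1$, keeping the generalized eigenvectors fixed) gives $A_W^{-1}=\varepsilon\hat A_W^{-1}\to0$ directly, hence $A^{-1}\to\tfrac{1}{\lambda_1}P_1$.

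With this limit in hand the conclusion is short. Set $Z:=-\tfrac{1}{\lambda_1}\langle\bsw_1,\bsxi\rangle$, a single scalar Gaussian variable with $\var(Z)=\|\bsw_1\|^2/\lambda_1^2>0$ (using $\var(\xi_l)=1$, $\cov(\xi_k,\xi_l)=0$ for $k\ne l$, and $\bsw_1\ne0$). The rank-one limit makes $[A^{-1}]_{il}\to(\bsv_1)_i(\bsw_1)_l/\lambda_1$, hence $\chi_i\to(\bsv_1)_i Z$: in the singular limit all states are proportional to one common source of variability. By bilinearity $\cov(\chi_i,\chi_j)\to(\bsv_1)_i(\bsv_1)_j\var(Z)$ and $\var(\chi_i)\to(\bsv_1)_i^2\var(Z)$, and since $\var(Z)>0$ the denominator stays away from zero, so the limit of the correlation is the ratio of the limits, namely $(\bsv_1)_i(\bsv_1)_j/|(\bsv_1)_i(\bsv_1)_j|$. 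Positivity of the dominant eigenvector $\bsv_1$ turns this into $1$ for all $i,j$ (without positivity one would only obtain $\sign((\bsv_1)_i(\bsv_1)_j)$), giving the stated full correlation and the unsolvability of Problem~\ref{prob:main} whenever some $c_{i,j}\neq1$.

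The main obstacle is the rank-one convergence, and more precisely pinning down what the ``singular limit'' means: if one only fixes the eigenvalues $\lambda_2,\dots,\lambda_n$ and not the invariant-subspace/generalized-eigenvector data, a strongly non-normal non-dominant block can keep $\|A_W^{-1}\|$ bounded away from $0$ even while $\Real(\lambda_i)\to-\infty$ (cf.\ a Jordan-type block with a large super-diagonal entry), so the statement needs the non-dominant dynamics to be scaled as a block, or at least the conditioning of the eigenbasis to remain controlled. Once that is fixed, the remaining work is only the bilinearity bookkeeping of the previous paragraph.
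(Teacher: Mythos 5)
Your proof is correct and lands on the same structural fact as the paper's, but it gets there by a genuinely different route. The paper passes to the Jordan form $A=UJU^{-1}$, invokes a singular-perturbation / slow-manifold reduction (by analogy with Theorem~1 of the cited prior work) to argue that in the limit the dynamics collapse onto the dominant eigendirection, and then reads off $\chi_i=U_{i1}\tilde\chi_1$ with $\tilde\chi_1=-\langle U^{-1}_{1\cdot},\bsxi\rangle/\lambda_1$. You skip the dynamics entirely and work with the exact equilibrium $\bschi=-A^{-1}\bsxi$, showing that $A^{-1}$ converges to the rank-one matrix $\lambda_1^{-1}\bsv_1\bsw_1^{T}$; the two endpoints coincide, since the first row of $U^{-1}$ is precisely the normalized left eigenvector $\bsw_1$, so your $Z=-\lambda_1^{-1}\langle\bsw_1,\bsxi\rangle$ is the paper's $\tilde\chi_1$. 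Your version buys two things. First, it avoids the critical-manifold machinery, which is superfluous for a purely steady-state statement. Second, it makes explicit a hypothesis the paper leaves implicit in the phrase ``singular limit'': as you observe, sending only $\Real(\lambda_i)\to-\infty$ does not by itself force $A_W^{-1}\to0$ (a strongly non-normal fast block can keep $\|A_W^{-1}\|$ bounded away from zero), so the limit must be read as a scaling of the whole non-dominant block with the (generalized) eigenvector data held fixed --- which is exactly what the paper does implicitly by fixing $U$ and scaling the eigenvalues of $J_2$. You also spell out the final step the paper elides: the ratio $U_{i1}U_{j1}/\sqrt{U_{i1}^2U_{j1}^2}$ equals $\sign([\bsv_1]_i[\bsv_1]_j)$, and it is the positivity of $\bsv_1$ that turns this into $+1$ (consistent with the paper's subsequent theorem for the mixed-sign case). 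I see no gaps.
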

\begin{proof}
Let $U$ be the matrix that transforms $A$ in its Jordan canonical form $J$, i.e., $A=UJU^{-1}$, with $J=
\begin{bmatrix}
    \lambda_1 & 0\\
    0 & J_2
\end{bmatrix}$.
Let $\tilde{\bsx}=U^{-1}\bsx$. Then,
\begin{equation}
\label{eq:change_coor}
    \begin{aligned}
        \dot{\tilde{x}}_1&=\lambda_1\tilde{x_1}+\langle U^{-1}_{1 \cdot}, [\xi_j]_{j=1}^n \rangle,\\
        [\dot{\tilde{x}}_i]_{i=2}^n&=J_2[x_j]_{j=1}^n+[\langle U_{i\cdot}^{-1}, [\xi_j]_{j=1}^n\rangle]_{i=1}^n\,.
    \end{aligned}
\end{equation}
It follows along the same lines as~\cite[Theorem~1]{franci2020positive} that in the limit $\varepsilon\to0$, model~\eqref{eq:change_coor} reduces to the slow dynamics
\begin{equation*}
    \begin{aligned}
        \dot{\tilde{x}}_1&=\lambda_1\tilde{x_1}+\langle U^{-1}_{1 \cdot}, [\xi_j]_{j=1}^n \rangle,\\
        \tilde x_i&=0,\quad i=2,\ldots,n\,.
    \end{aligned}
\end{equation*}
Let $\tilde\chi_1=-\frac{\langle U^{-1}_{1 \cdot}, [\xi_j]_{j=1}^n \rangle}{\lambda_1}$. Then, in the limit $\varepsilon\to 0$, $\chi_i=U_{i1}\tilde{\chi}_1$ and therefore, at equibrilium, $\corr(\chi_i,\chi_j)=\corr(U_{i1}\tilde\chi_1,U_{j1}\tilde\chi_1)=\frac{\cov(U_{i1}\tilde\chi_1,U_{j1}\tilde\chi_1)}{\sqrt{\var(U_{i1}\tilde\chi_1)\var(U_{j1}\tilde\chi_1)}}=\frac{U_{i1}U_{j1}\cov(\tilde\chi_1,\tilde\chi_1)}{\sqrt{U_{i1}^2U_{j1}^2\var(\tilde\chi_1)^2}}=1$.
\end{proof}

Theorem~\ref{thm: full corr} shows that if $A$ has a positive dominant eigenvector with a negative associated eigenvalue $\lambda_1$ (e.g., $A$ is Metzler and irreducible~\cite[Lemma VIII.1]{angeli2003monotone} or $A$ is eventually positive~\cite[Theorem 5]{giordano2017interaction}), and if the separation between the slow eigenvalue and the rest of the spectrum is sufficiently large, then all pairs of variables are fully correlated. Figure~\ref{fig: corr 1dom} illustrates this result by simulating one thousand instances of model~\eqref{eq:main_system} with $A$ constructed through the inverse of the unitary change of base that diagonalizes it to possess the following (eigenvalue,eigenvector) pairs\footnote{Note that three chosen eigenvectors are orthonormal. For clarity, vectors' entries are reported up to the second digit.}: $(-0.01,(0.45,0.81,0.36))$, $(-0.2,(-0.81,0.54,-0.2))$, $(-1.0,(-0.36,-0.2,0.9))$. Observe that the resulting equilibrium cloud appears as a three-dimensional ellipse sharply elongated along the dominant (slow) direction because the input-to-state gain along this direction is much larger than along non-dominant ones.
Projecting equilibria on the three coordinate planes, the measured correlations are all (close to, due to finite sample size and finite $\varepsilon=0.01$) one.

\begin{figure}
  \begin{subfigure}[b]{.21\textwidth}
    \centering
    \includegraphics[width=\linewidth]{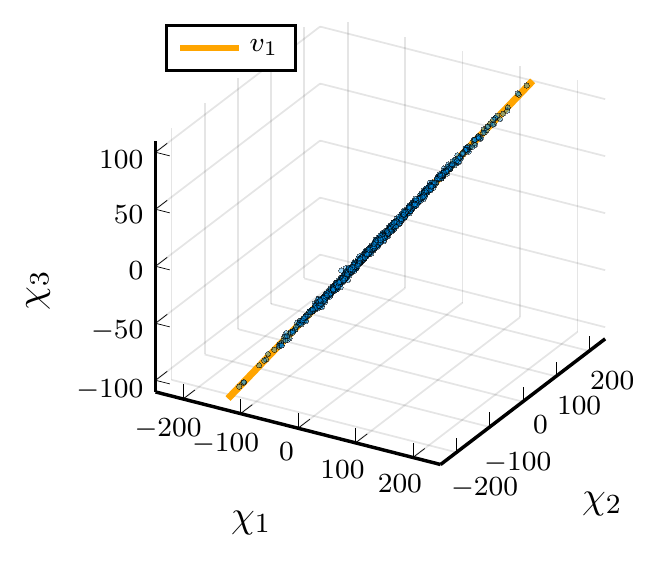}
  \end{subfigure}
  \hfill
  \begin{subfigure}[b]{.21\textwidth}
    \centering
    \includegraphics[width=\linewidth]{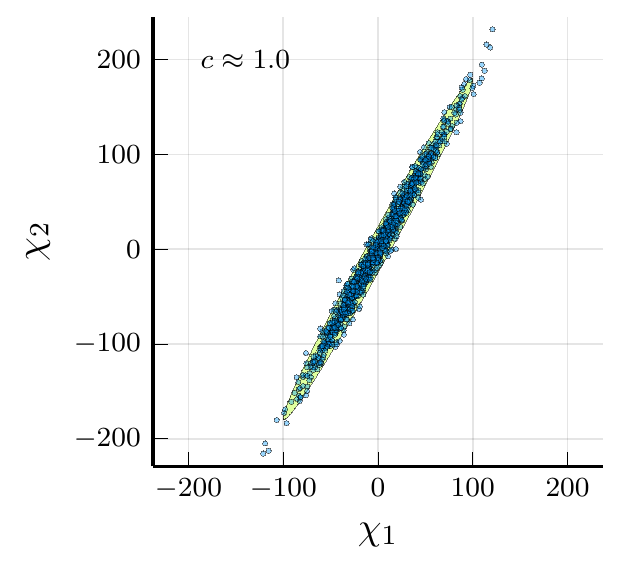}
  \end{subfigure}
  \medskip
  \begin{subfigure}[b]{.21\textwidth}
    \centering
    \includegraphics[width=\linewidth]{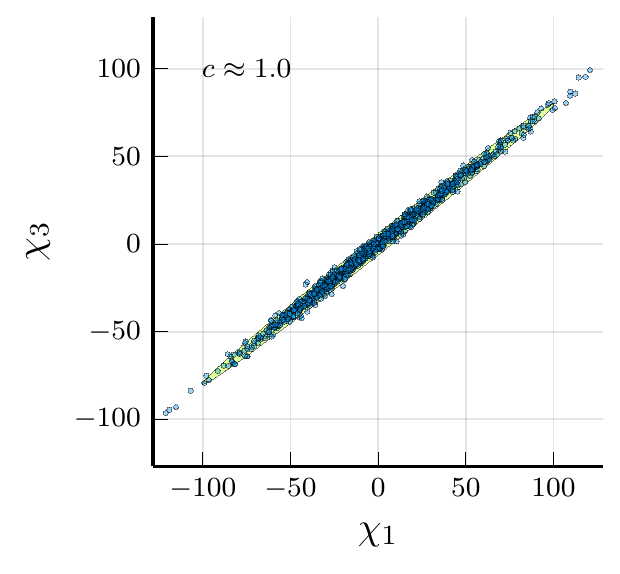}
  \end{subfigure}
  \hfill
  \begin{subfigure}[b]{.21\textwidth}
    \centering
    \includegraphics[width=\linewidth]{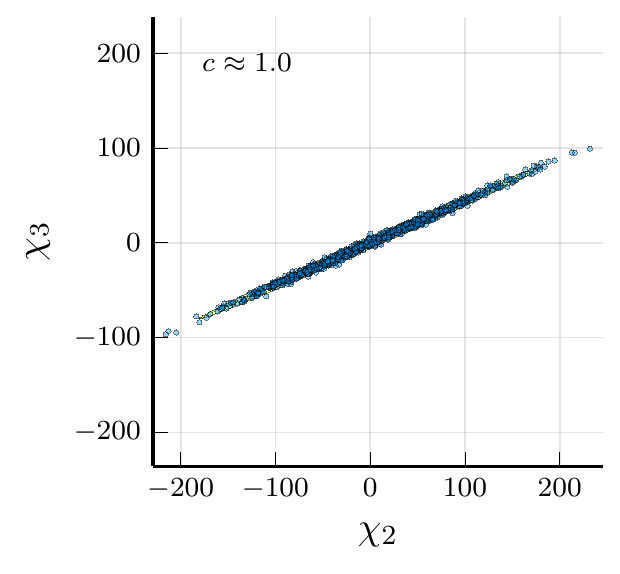}
  \end{subfigure}
\caption{Steady state solution for one thousand instances of model~\eqref{eq:main_system} in case $A$ possesses a single dominant direction. The computed correlations are close to one as predicted. Yellow ellipses are covariance ellipses.}
\label{fig: corr 1dom}
\end{figure}

\begin{remark}
For $\epsilon>0$ and sufficiently small, if follows along the same line as~\cite[Proposition~5]{franci2020positive}, that, $\corr(x_i^*,x_j^*)=1-\mathcal O(\varepsilon)$ for all $i,j\in\{1,\ldots,n\}$.
\end{remark}

The following theorem addresses the case in which the dominant eigendirection $\bsv_1$ has mixed-sign entries. In this case, any pair of variables is either fully correlated or fully anticorrelated.

\begin{thm}
Suppose $A$ has a dominant eigendirection $\bsv_1$ with associated eigenvalue $\lambda_1<0$. Let $\lambda_2,
\ldots,\lambda_n$ be the remaining eigenvalues of $A$ satisfying $\Real(\lambda_i)\ll\lambda_1$, $i=2,\ldots,n$. Then, in the singular limit $\varepsilon=\frac{-1}{\max_{i>1}\{\Real(\lambda_i)\}}\to0$ and fixed $\lambda_1$, $\corr(\chi_i,\chi_j)=\sign([\bsv_1]_i[\bsv_1]_j)$ for all $i,j\in\{1,\ldots,n\}$ and therefore Problem~\ref{prob:main} is unsolvable whenever $c_{i,j}\neq \pm 1$.
\end{thm}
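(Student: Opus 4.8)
The plan is to reproduce, almost verbatim, the singular perturbation argument in the proof of Theorem~\ref{thm: full corr}, the only new ingredient being careful bookkeeping of signs in the final correlation computation. First I would put $A$ in Jordan canonical form $A=UJU^{-1}$ with $J=\begin{bmatrix}\lambda_1 & 0\\ 0 & J_2\end{bmatrix}$; since $\bsv_1$ is a single dominant eigendirection the $\lambda_1$-block is $1\times1$, and I would normalize $U$ so that its first column is $\bsv_1$, i.e.\ $U_{i1}=[\bsv_1]_i$. Passing to $\tilde{\bsx}=U^{-1}\bsx$ as in~\eqref{eq:change_coor} and invoking~\cite[Theorem~1]{franci2020positive} in the limit $\varepsilon\to0$ --- the positivity of $\bsv_1$ played no role in that reduction step, only the spectral gap did --- the fast components vanish, $\tilde{x}_i\to0$ for $i=2,\ldots,n$, leaving the scalar slow dynamics $\dot{\tilde{x}}_1=\lambda_1\tilde{x}_1+\langle U^{-1}_{1\cdot},[\xi_j]_{j=1}^n\rangle$ with random equilibrium $\tilde\chi_1=-\langle U^{-1}_{1\cdot},[\xi_j]_{j=1}^n\rangle/\lambda_1$.

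Second, in this limit $\bsx=U\tilde{\bsx}$ collapses to $\chi_i=U_{i1}\tilde\chi_1=[\bsv_1]_i\,\tilde\chi_1$, so every equilibrium component is a scalar multiple of the \emph{same} scalar random variable $\tilde\chi_1$, which has strictly positive variance because $U^{-1}_{1\cdot}\neq0$ and the $\xi_j$ are independent with unit variance. Then, for any pair $i,j$ with $[\bsv_1]_i,[\bsv_1]_j\neq0$, bilinearity of the covariance gives
\[
\corr(\chi_i,\chi_j)=\frac{[\bsv_1]_i[\bsv_1]_j\,\var(\tilde\chi_1)}{\sqrt{[\bsv_1]_i^2[\bsv_1]_j^2\,\var(\tilde\chi_1)^2}}=\frac{[\bsv_1]_i[\bsv_1]_j}{\lvert[\bsv_1]_i[\bsv_1]_j\rvert}=\sign\!\big([\bsv_1]_i[\bsv_1]_j\big).
\]
Since the right-hand side lies in $\{-1,+1\}$, Problem~\ref{prob:main} cannot be solved whenever some target satisfies $c_{i,j}\notin\{-1,+1\}$.

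The computation is routine; the only delicate point --- and hence the main obstacle, such as it is --- is the degenerate case in which some entry $[\bsv_1]_i$ vanishes. Then $\chi_i\to0$ and $\var(\chi_i)\to0$ in the singular limit, so $\corr(\chi_i,\chi_j)$ is undefined, and the $\sign$ formula must be read as restricted to index pairs on which $\bsv_1$ is nonzero (equivalently, one assumes $\bsv_1$ has no zero entry, which is automatic when $A$ is irreducible Metzler or eventually positive). I would also note that the claim does not depend on the chosen representative of the eigendirection: scaling $\bsv_1$ by any $\alpha\neq0$ leaves $\sign([\bsv_1]_i[\bsv_1]_j)$ unchanged, so the induced partition of the indices into mutually correlated and mutually anticorrelated groups is intrinsic to the dominant eigendirection.
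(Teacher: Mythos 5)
Your proposal is correct and follows exactly the route the paper takes: it repeats the Jordan-form/singular-perturbation reduction of Theorem~\ref{thm: full corr} and then uses the identification $U_{i1}=[\bsv_1]_i$ together with careful sign bookkeeping in the correlation quotient, which is precisely the paper's (one-line) argument. Your added remark about the degenerate case $[\bsv_1]_i=0$, where the correlation is undefined, is a sensible caveat that the paper omits.
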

\begin{proof}
Follows along the same lines as the proof of Theorem~\ref{thm: full corr} and observing that $U_{i1}=[\bsv_1]_i$.
\end{proof}

\section{CORRELATED VARIABILITY CONTROL IN THE PRESENCE OF A REPEATED DOMINANT EIGENVALUE}

The results in the Theorems~\ref{thm:no corr} and~\ref{thm: full corr} shows that it should {\it a priori} be possible to span the whole range of correlation degrees, from null (no network) to full (strongly 1-dominant network), by suitably changing the network structure. The rationale we follow here is that increasing the dimension of the dominant subspace and designing suitable slow dynamics on it leads a constructive geometric way to solve Problem~\ref{prob:main}.

\subsection{The dominant eigenvalue has algebraic and geometric multiplicity two}

\begin{hyp}\label{Hyp:2 dominance}
$A$ has a real repeated dominant eigenvalue $\lambda_1<0$, with $\gamma_A(\lambda_1)=\mu_A(\lambda_1)=2$, and dominant eigenvectors $\bsv_1$ and $\bsv_2$. Let $\lambda_3,\ldots,\lambda_{n}$, $\Real(\lambda_i)\ll\lambda_1$, $i=3,\ldots,n$ be the remaining eigenvalues of $A$.
\end{hyp}

\begin{lemma}\label{lem:orthonormal NoJor}
Without loss of generality, $\bsv_1$ and $\bsv_2$ can be taken to be orthonormal.
\end{lemma}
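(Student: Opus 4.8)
The plan is to exploit the freedom in choosing a basis of the two‑dimensional eigenspace $E_1 = \R\{\bsv_1,\bsv_2\}$ associated with $\lambda_1$. Because $\gamma_A(\lambda_1) = \mu_A(\lambda_1) = 2$, the matrix $A$ restricted to $E_1$ acts as $\lambda_1 I$, so \emph{every} nonzero vector in $E_1$ is an eigenvector and \emph{any} basis of $E_1$ is a valid pair of dominant eigenvectors. Hence replacing $(\bsv_1,\bsv_2)$ by $(\bsw_1,\bsw_2) = (\bsv_1 Q_{11} + \bsv_2 Q_{21},\ \bsv_1 Q_{12} + \bsv_2 Q_{22})$ for any invertible $Q \in \R^{2\times 2}$ again yields a pair of dominant eigenvectors, and this does not alter $A$, the eigenvalue $\lambda_1$, or any of the remaining spectral data in Assumption~\ref{Hyp:2 dominance}.

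First I would form the $n\times 2$ matrix $V = [\bsv_1\ \bsv_2]$ and consider its Gram matrix $G = V^\top V$, which is symmetric positive definite since $\bsv_1,\bsv_2$ are linearly independent (they span a $2$‑dimensional space). Then I would pick $Q = G^{-1/2}$ (the symmetric positive‑definite square root of $G^{-1}$) and set $W = VQ$, i.e. $[\bsw_1\ \bsw_2] = [\bsv_1\ \bsv_2]G^{-1/2}$. A one‑line computation gives $W^\top W = Q^\top V^\top V Q = G^{-1/2} G\, G^{-1/2} = I$, so $\bsw_1$ and $\bsw_2$ are orthonormal. Since $Q$ is invertible, $\{\bsw_1,\bsw_2\}$ is still a basis of $E_1$, hence still a pair of dominant eigenvectors of $A$ for $\lambda_1$, which is exactly the claimed "without loss of generality." Alternatively, and perhaps cleaner for exposition, I would simply invoke Gram–Schmidt on $(\bsv_1,\bsv_2)$ followed by normalization: the output vectors lie in $\R\{\bsv_1,\bsv_2\} = E_1$ and are orthonormal, and they inherit the eigenvector property for the same reason.

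The key steps in order are therefore: (i) observe that $\gamma_A(\lambda_1)=\mu_A(\lambda_1)=2$ forces $A|_{E_1} = \lambda_1 I$, so every nonzero element of $E_1$ is a genuine eigenvector and any basis change within $E_1$ preserves the hypotheses; (ii) orthonormalize $(\bsv_1,\bsv_2)$ within $E_1$, either by the Gram‑matrix $G^{-1/2}$ trick or by Gram–Schmidt plus normalization; (iii) note the resulting vectors $(\bsw_1,\bsw_2)$ are still a dominant eigenvector pair, which is all that is required.

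I do not anticipate a genuine obstacle here; the only subtlety worth stating carefully is point (i): the argument would \emph{fail} if $\gamma_A(\lambda_1) < \mu_A(\lambda_1)$, because then an arbitrary combination of a true eigenvector and a generalized eigenvector need not be an eigenvector, and orthonormalizing could destroy the eigenvector structure. Since Assumption~\ref{Hyp:2 dominance} explicitly imposes $\gamma_A(\lambda_1)=\mu_A(\lambda_1)=2$, this is exactly the hypothesis that makes the reduction legitimate, and I would flag it explicitly in the proof so the reader sees where semisimplicity is used.
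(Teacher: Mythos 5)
Your proposal is correct and follows essentially the same route as the paper: the key observation in both is that semisimplicity ($\gamma_A(\lambda_1)=\mu_A(\lambda_1)=2$) makes every nonzero vector of $\R\{\bsv_1,\bsv_2\}$ an eigenvector, so one may simply replace $\bsv_1,\bsv_2$ by an orthonormal basis of that span. Your version merely spells out the orthonormalization step (Gram matrix square root or Gram--Schmidt) that the paper leaves implicit.
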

\begin{proof}
Any vector in $\R\{\bsv_1,\bsv_2\}$ is also an eigenvector of $A$ with eigenvalue $\lambda_1$. If not already orthonormal, redefine $\bsv_1$ and $\bsv_2$ to be an orthonormal basis of $\R\{\bsv_1,\bsv_2\}$.
\end{proof}

\begin{lemma}\label{lem: unitary No Jor}
There exists is a unitary matrix $Q$ such that $Q^{-1}AQ=\begin{bmatrix}
        T_{\lambda_1}&C\\
        0&B
        \end{bmatrix}$,
where $T_{\lambda_1}=\begin{bmatrix}
    \lambda_1&0\\
    0&\lambda_1
    \end{bmatrix}$,
$C\in \mathbb{R}^{2,n-2}$, $B\in \mathbb{R}^{n-2,n-2}$, and the eigenvalues of $B$ are exactly $\lambda_3, \ldots, \lambda_{n}$.
\end{lemma}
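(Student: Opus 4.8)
The plan is to construct $Q$ explicitly by completing the orthonormal pair $\{\bsv_1,\bsv_2\}$ provided by Lemma~\ref{lem:orthonormal NoJor} to an orthonormal basis of $\R^n$, and then to read off the claimed block structure directly from the eigenvector relations $A\bsv_1=\lambda_1\bsv_1$ and $A\bsv_2=\lambda_1\bsv_2$. This is a real Schur-type argument tailored to the two-dimensional dominant eigenspace.

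First, apply Gram--Schmidt (or any basis-completion procedure) to obtain vectors $\bsw_3,\ldots,\bsw_n$ such that $\{\bsv_1,\bsv_2,\bsw_3,\ldots,\bsw_n\}$ is an orthonormal basis of $\R^n$, and set $Q=[\,\bsv_1\ \ \bsv_2\ \ \bsw_3\ \cdots\ \bsw_n\,]$. Then $Q$ is orthogonal, hence unitary, with $Q^{-1}=Q^{\top}$. Next, compute $Q^{-1}AQ$ one column at a time: the first column of $AQ$ is $A\bsv_1=\lambda_1\bsv_1$, so the first column of $Q^{\top}AQ$ is $\lambda_1 Q^{\top}\bsv_1=\lambda_1 e_1$, and likewise the second column is $\lambda_1 e_2$. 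Hence, in $2\times(n-2)$ block form, $Q^{-1}AQ=\begin{bmatrix} T_{\lambda_1}&C\\ 0&B\end{bmatrix}$ with $T_{\lambda_1}=\lambda_1 I_2$ for some $C\in\R^{2\times(n-2)}$ and $B\in\R^{(n-2)\times(n-2)}$; the lower-left block vanishes precisely because the last $n-2$ entries of the first two columns of $Q^{\top}AQ$ are zero.

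It remains to identify the spectrum of $B$. Since $Q^{-1}AQ$ is block upper triangular, $\det(tI-A)=\det(tI-Q^{-1}AQ)=(t-\lambda_1)^2\det(tI_{n-2}-B)$. By Assumption~\ref{Hyp:2 dominance} the algebraic multiplicity $\mu_A(\lambda_1)$ equals $2$, so $(t-\lambda_1)$ does not divide $\det(tI_{n-2}-B)$; equivalently $\lambda_1$ is not an eigenvalue of $B$. Matching the remaining factors of the characteristic polynomial shows that the eigenvalues of $B$, counted with multiplicity, are exactly $\lambda_3,\ldots,\lambda_n$.

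There is essentially no analytical obstacle here; the construction is routine linear algebra. The only place where the hypothesis is genuinely used—rather than pure orthogonalization—is the last step: $\mu_A(\lambda_1)=2$ is exactly what rules out any of the ``hidden'' eigenvalues $\lambda_3,\ldots,\lambda_n$ coinciding with $\lambda_1$, so that $B$ indeed captures precisely the non-dominant part of the spectrum. (Note that $\gamma_A(\lambda_1)=2$ is what makes Lemma~\ref{lem:orthonormal NoJor} applicable in the first place, ensuring a genuine two-dimensional eigenspace to orthonormalize.)
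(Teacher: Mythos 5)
Your proof is correct and follows essentially the same route as the paper's: complete $\{\bsv_1,\bsv_2\}$ to an orthonormal basis, take $Q$ with these columns, and read off the block structure from the eigenvector relations. Your final step is in fact slightly more careful than the paper's, which only invokes similarity of $A$ and $Q^{-1}AQ$, whereas you correctly note that $\mu_A(\lambda_1)=2$ is needed to conclude that $\lambda_1$ does not also appear in the spectrum of $B$.
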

\begin{proof}
Let $\bsw_3,\bsw_2,\ldots,\bsw_n$ be vectors such that $\{\bsv_1,\bsv_2,\bsw_3,\ldots,\bsw_n\}$ is an orthonormal basis of $\mathbb{R}^n$ and take $Q$ to be the matrix whose columns are these vectors. Then, $Q$ is unitary. Furthermore, $Q^{-1}AQ=Q^{-1}
    \begin{bmatrix}
    \lambda_1\bsv_1&\lambda_1\bsv_2&A\bsw_3&\cdots&A\bsw_n\\
    \end{bmatrix}=\left[
    \begin{array}{c|c}
    \begin{matrix}
        \lambda_1&0\\
        0&\lambda_1
        \end{matrix} & C \\
    \hline 
    0 & B
\end{array}
    \right]=\begin{bmatrix}
        T_{\lambda_1}&C\\
        0&B
        \end{bmatrix}$.
To see that the eigenvalues of $B$ are $\lambda_3,\ldots,\lambda_{n}$, notice that $A$ and $Q^{-1}AQ$ are similar and therefore have the same eigenvalues.
\end{proof}

Let $\tilde{x}=Q^{-1}x$. Then,
\begin{subequations}\label{eq: 2 domtilde sys}
\begin{align}
    \dot{\tilde{x}}_1&=\lambda_1\tilde{x}_1+\langle C_{1\cdot},[\tilde{x}_j]_{j=3}^n\rangle+\langle Q_{1\cdot}^{-1},[\xi_j]_{j=1}^n\rangle\\
    \dot{\tilde{x}}_2&=\lambda_1\tilde{x}_2+\langle C_{2\cdot},[\tilde{x}_j]_{j=3}^n\rangle+\langle Q_{2\cdot}^{-1},[\xi_j]_{j=1}^n\rangle\\
    [\dot{\tilde{x}}_i]_{i=3}^n&=B[\tilde{x}_j]_{j=3}^n+[\langle Q_{i\cdot}^{-1},[\xi_j]_{j=1}^n\rangle]_{i=3}^n.
\end{align}
\end{subequations}

\begin{thm}\label{thm: noJor slow dyn}
Under Assumption~\ref{Hyp:2 dominance}, the slow dynamics of model~\eqref{eq:main_system} associated to the singular limit $\varepsilon=\frac{-1}{\max_{i>2}\{\Real(\lambda_i)\}}\to0$ and fixed $\lambda_1$ of model~\eqref{eq: 2 domtilde sys} reads
\begin{subequations}\label{eq:reduced22}
\begin{align}
    \dot{\tilde{x}}_1&=\lambda_1\tilde{x}_1+\langle Q_{1\cdot}^{-1},[\xi_j]_{j=1}^n\rangle\\
    \dot{\tilde{x}}_2&=\lambda_1\tilde{x}_2+\langle Q_{2\cdot}^{-1},[\xi_j]_{j=1}^n\rangle.\\
    \tilde x_i&=0,\quad i=3,\ldots,n\,.
\end{align}
\end{subequations}
Furthermore, the associated critical manifold $\mathcal M=\{\tilde x_3=\cdots=\tilde x_n=0\}$ is exponentially attractive.
\end{thm}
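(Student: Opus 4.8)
The plan is to recognize \eqref{eq: 2 domtilde sys} as a linear system in standard singularly perturbed (Tikhonov/Fenichel) form and to read off both the reduced dynamics and the attractivity of the critical manifold from the spectral data of the fast block $B$. First I would make the timescale separation explicit: by the definition of $\varepsilon$, every eigenvalue of $B$ has real part $\le -1/\varepsilon$, so $\hat B:=\varepsilon B$ has spectral abscissa $-1$ and is in particular Hurwitz and invertible. Multiplying the fast block \eqref{eq: 2 domtilde sys}c by $\varepsilon$ rewrites it as $\varepsilon [\dot{\tilde x}_i]_{i=3}^n = \hat B[\tilde x_j]_{j=3}^n + \varepsilon[\langle Q_{i\cdot}^{-1},[\xi_j]_{j=1}^n\rangle]_{i=3}^n$, while \eqref{eq: 2 domtilde sys}a--b (with $\lambda_1$ held fixed) are the slow block. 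The key structural observation is that the fast block does \emph{not} depend on the slow variables $\tilde x_1,\tilde x_2$: the $[\tilde x_j]_{j=3}^n$-subsystem is autonomous, which makes the reduction fully explicit.

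Next I would extract the critical manifold and the reduced dynamics. Setting $\varepsilon=0$ in the rescaled fast block gives the algebraic constraint $\hat B[\tilde x_j]_{j=3}^n=0$; since $\hat B$ is invertible its only solution is $\tilde x_i=0$, $i=3,\ldots,n$, which is precisely $\mathcal M$. Substituting $\tilde x_i=0$, $i\ge 3$, into \eqref{eq: 2 domtilde sys}a--b annihilates the $\langle C_{1\cdot},\cdot\rangle$ and $\langle C_{2\cdot},\cdot\rangle$ terms and leaves exactly \eqref{eq:reduced22}. To justify that the trajectories of \eqref{eq: 2 domtilde sys} converge to those of \eqref{eq:reduced22}, I would invoke Tikhonov's theorem for singularly perturbed systems — or, as in the proof of Theorem~\ref{thm: full corr}, cite the analogous argument in \cite{franci2020positive} — whose hypotheses hold trivially here: for every frozen value of the slow variables and of $\bsxi$, the boundary-layer system $\frac{d}{d\tau}[\tilde x_j]_{j=3}^n=\hat B[\tilde x_j]_{j=3}^n$ has the globally exponentially stable equilibrium $0$ (because $\hat B$ is Hurwitz), which is an isolated root of the constraint.

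Finally, exponential attractivity of $\mathcal M$. Since the fast subsystem is autonomous it integrates in closed form, $[\tilde x_i(t)]_{i=3}^n = e^{Bt}\big([\tilde x_i(0)]_{i=3}^n + B^{-1}[\langle Q_{i\cdot}^{-1},\bsxi\rangle]_{i=3}^n\big) - B^{-1}[\langle Q_{i\cdot}^{-1},\bsxi\rangle]_{i=3}^n$, so the distance $\mathrm{dist}([\tilde x(t)],\mathcal M)=\|[\tilde x_i(t)]_{i=3}^n\|$ is bounded by a term that decays at rate $1/\varepsilon$ (from $\|e^{Bt}\|$, since the eigenvalues of $B$ have real part $\le -1/\varepsilon$) plus a term of size $\mathcal O(\|B^{-1}\|)=\mathcal O(\varepsilon)$; letting $\varepsilon\to0$ the second term vanishes and $\mathcal M$ becomes attractive with an exponential rate that diverges. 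Equivalently, the Jacobian of the fast vector field transverse to $\mathcal M$ is $B$ (respectively $\hat B$ after rescaling time), which is Hurwitz, so $\mathcal M$ is a normally hyperbolic, exponentially attractive critical manifold in the sense of Fenichel. The only point requiring care — and, I expect, the main obstacle — is that $B$ need not be diagonalizable, so $\|e^{Bt}\|$ carries polynomial-in-$t$ factors from the nilpotent part of its Jordan form; this is handled by an estimate of the type $\|e^{Bt}\|\le \kappa\, p(t/\varepsilon)\, e^{-t/\varepsilon}$ for a fixed polynomial $p$, which still tends to $0$ for each $t>0$, provided one checks that the constant $\kappa$ can be chosen uniformly in $\varepsilon$ (using that $\hat B$ has normalized spectral abscissa). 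With this, no essential difficulty remains.
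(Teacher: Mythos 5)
Your proposal is correct and follows essentially the same route as the paper: both rescale the fast block as $B_1=\varepsilon B$ (your $\hat B$), observe that this matrix is Hurwitz with spectrum bounded away from zero uniformly in $\varepsilon$, and conclude the standard singular-perturbation reduction to the critical manifold $\mathcal M=\{\tilde x_3=\cdots=\tilde x_n=0\}$ with the $C$-coupling terms vanishing there. You supply more detail than the paper does (the explicit solution of the autonomous fast subsystem and the Jordan-block estimate on $\|e^{Bt}\|$), but these only flesh out the same argument rather than change it.
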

\begin{proof}
Let $B_1=\varepsilon B$. Then $B_1$ has spectrum $\varepsilon\lambda_3,\ldots,\varepsilon\lambda_n$, with $\varepsilon\max_{i>2}\{\Real(\lambda_i)\}=-1$. Then~\eqref{eq: 2 domtilde sys} becomes
\begin{align}\label{eq:reduced22 temp}
    \dot{\tilde{x}}_1&=\lambda_1\tilde{x}_1+\langle C_{1\cdot},[\tilde{x}_j]_{j=3}^n\rangle+\langle Q_{1\cdot}^{-1},[\xi_j]_{j=1}^n\rangle\\
    \dot{\tilde{x}}_2&=\lambda_1\tilde{x}_2+\langle C_{2\cdot},[\tilde{x}_j]_{j=3}^n\rangle+\langle Q_{2\cdot}^{-1},[\xi_j]_{j=1}^n\rangle\\
    \varepsilon[\dot{\tilde{x}}_i]_{i=3}^n&=B_1[\tilde{x}_j]_{j=3}^n+\varepsilon[\langle Q_{i\cdot}^{-1},[\xi_j]_{j=1}^n\rangle]_{i=3}^n.
\end{align}
Because $B_1$ is Hurwitz and its spectrum is bounded away from zero for all $\varepsilon>0$, in the limit $\varepsilon\to0$, model~\eqref{eq:reduced22 temp} reduces to the slow dynamics~\eqref{eq:reduced22} defined on the critical manifold $\mathcal M$, which is also exponentially attractive.
\end{proof}

Let $\tilde\bschi=U^{-1}\chi$.  We are now in condition to compute steady-state correlations of model~\eqref{eq:main_system}.

\begin{thm}
\label{thm:var_cor_tildes}
Under Assumption~\ref{Hyp:2 dominance} and in the singular limit $\varepsilon=\frac{-1}{\max_{i>2}\{\Real(\lambda_i)\}}\to0$ and fixed $\lambda_1$, equilibria of model~\eqref{eq:main_system} satisfy $\var(\tilde\chi_1)=\var(\tilde\chi_2)=\frac{1}{\lambda_1^2},\ \cov(\tilde\chi_1,\tilde\chi_2)=0$.
Furthermore, for all $i,j\in\{1,\ldots,n\}$,
\begin{equation}\label{eq: corr formulas noJor}
   \corr(\chi_i,\chi_j)=\frac{[\bsv_1]_i[\bsv_1]_j+[\bsv_2]_i[\bsv_2]_j}{\sqrt{([\bsv_1]_i^2+[\bsv_2]_i^2)([\bsv_1]_j^2+[\bsv_2]_j^2)}}. 
\end{equation}
\end{thm}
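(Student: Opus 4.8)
The plan is to work directly on the reduced slow dynamics~\eqref{eq:reduced22} provided by Theorem~\ref{thm: noJor slow dyn} and to transport the input covariance structure through the unitary change of coordinates $\tilde\bschi=Q^{-1}\bschi$ (equivalently $\bschi=Q\tilde\bschi$), exactly as in the proof of Theorem~\ref{thm: full corr} but now with a two-dimensional dominant subspace.

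First I would read off the random equilibrium of~\eqref{eq:reduced22}: it satisfies $\tilde\chi_1=-\lambda_1^{-1}\langle Q_{1\cdot}^{-1},[\xi_j]_{j=1}^n\rangle$, $\tilde\chi_2=-\lambda_1^{-1}\langle Q_{2\cdot}^{-1},[\xi_j]_{j=1}^n\rangle$, and $\tilde\chi_i=0$ for $i=3,\ldots,n$. Since $Q$ is real and unitary, $Q^{-1}=Q^\top$, so the first two rows of $Q^{-1}$ are precisely $\bsv_1^\top$ and $\bsv_2^\top$, the first two columns of $Q$, which by Lemma~\ref{lem:orthonormal NoJor} are orthonormal. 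Using bilinearity of $\cov$ together with $\cov(\xi_k,\xi_l)=\delta_{kl}$ then gives $\var(\tilde\chi_1)=\lambda_1^{-2}\|\bsv_1\|^2=\lambda_1^{-2}$, likewise $\var(\tilde\chi_2)=\lambda_1^{-2}$, and $\cov(\tilde\chi_1,\tilde\chi_2)=\lambda_1^{-2}\langle\bsv_1,\bsv_2\rangle=0$, which is the first claim.

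Next I would push this back to the original coordinates. As the columns of $Q$ are $\bsv_1,\bsv_2,\bsw_3,\ldots,\bsw_n$ and $\tilde\chi_i=0$ for $i\geq3$ in the limit, $\bschi=\bsv_1\tilde\chi_1+\bsv_2\tilde\chi_2$, i.e.\ $\chi_i=[\bsv_1]_i\tilde\chi_1+[\bsv_2]_i\tilde\chi_2$. Expanding $\cov(\chi_i,\chi_j)$ by bilinearity and substituting the values just found yields $\cov(\chi_i,\chi_j)=\lambda_1^{-2}\big([\bsv_1]_i[\bsv_1]_j+[\bsv_2]_i[\bsv_2]_j\big)$ and, in particular, $\var(\chi_i)=\lambda_1^{-2}\big([\bsv_1]_i^2+[\bsv_2]_i^2\big)$. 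Forming $\corr(\chi_i,\chi_j)=\cov(\chi_i,\chi_j)/\sqrt{\var(\chi_i)\var(\chi_j)}$, the factors $\lambda_1^{-2}$ cancel and~\eqref{eq: corr formulas noJor} follows.

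The content here is essentially careful bookkeeping; the two points deserving a line of care are (i) the identification $Q_{1\cdot}^{-1}=\bsv_1^\top$ and $Q_{2\cdot}^{-1}=\bsv_2^\top$, which uses orthogonality of $Q$ and the construction in Lemma~\ref{lem: unitary No Jor}, and (ii) the tacit requirement $\var(\chi_i)>0$, i.e.\ that $[\bsv_1]_i$ and $[\bsv_2]_i$ are not both zero, for $\corr(\chi_i,\chi_j)$ to be defined; a variable whose index is orthogonal to the entire dominant subspace has vanishing steady-state fluctuation in the limit and is a degenerate case to be excluded. Finally, ``in the singular limit'' is understood as identifying the steady-state second moments of~\eqref{eq:main_system} with those of the reduced system~\eqref{eq:reduced22}, which is legitimate by the same singular-perturbation argument underlying Theorem~\ref{thm: noJor slow dyn}; I would invoke it rather than redo it.
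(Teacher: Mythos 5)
Your proposal is correct and follows essentially the same route as the paper's proof: read off $\tilde\chi_1,\tilde\chi_2$ from the reduced slow dynamics, use $Q_{j\cdot}^{-1}=\bsv_j^\top$ together with orthonormality to get the variances and vanishing covariance, then push forward via $\chi_i=[\bsv_1]_i\tilde\chi_1+[\bsv_2]_i\tilde\chi_2$ and cancel the $\lambda_1^{-2}$ factors. Your added remark that $[\bsv_1]_i$ and $[\bsv_2]_i$ must not both vanish for $\corr(\chi_i,\chi_j)$ to be well defined is a legitimate point of care that the paper's proof leaves implicit.
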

\begin{proof}
 Observe that $\tilde\chi_1=-\frac{\langle Q_{1\cdot}^{-1},[\xi_j]_{j=1}^n\rangle}{\lambda_1,},\ \tilde\chi_2=-\frac{\langle Q_{2\cdot}^{-1},[\xi_j]_{j=1}^n\rangle}{\lambda_1}$.
 Using the facts that $\cov(\xi_i,\xi_j)=\delta_{ij}$ and that the covariance function is bilinear, we have
 $
\cov(\tilde\chi_j,\tilde\chi_l)=\frac{1}{\lambda_1^2}\sum_{k=1}^n Q_{jk}^{-1}Q_{lk}^{-1},\ j,l=1,2
 $.
 The formulas for $\var(\tilde\xi_j)$ and $\cov(\tilde\chi_j,\tilde\chi_l)$, $j,l=1,2$, then follow by recalling that, by Lemma~\ref{lem: unitary No Jor}, $Q_{j\cdot}^{-1}=\bsv_j$, $j=1,2$, and that, by Lemma~\ref{lem:orthonormal NoJor}, $\bsv_1$ and $\bsv_2$ are orthonormal. The formulas for $\corr(\chi_i,\chi_j)$ follow by invoking Theorem~\ref{thm: noJor slow dyn}, which implies that in the limit $\varepsilon\to0$, $\chi_i=[\bsv_1]_i\tilde\chi_1+[\bsv_2]_i\tilde\chi_2$.
\end{proof}

Theorem~\ref{thm:var_cor_tildes} shows that, under Assumption~\ref{Hyp:2 dominance} and in the singular limit of strong dominance $\varepsilon\to0$, the $n$-dimensional covariance ellipse generated by the steady states of model~\eqref{eq:main_system} reduces to a (two-dimensional) circle of radius $|\lambda_1|^{-1}$ on the dominant subspace spanned by $\bsv_1$ and $\bsv_2$. Furthermore it provides explicit formulas~\eqref{eq: corr formulas noJor} in terms of the components of the dominant eigenvectors $\bsv_1$ and $\bsv_2$ for the steady-state correlations $\corr(\chi_i,\chi_j)$.

Expressions~\eqref{eq: corr formulas noJor} can be plugged into out-of-the-box optimization software to find solutions to Problem~\ref{prob:main} but still provide no guarantees about the existence of such solutions nor about the geometry of the possible solution set. We have the following theorem.

\begin{thm}\label{thm: noJor}
Let Assumption~\ref{Hyp:2 dominance} be satisfied and let $n=3$. Then, in the limit $\varepsilon=\frac{-1}{\max_{i>1}\{\Real(\lambda_i)\}}\to0$ and fixed $\lambda_1$, Problem~\ref{prob:main} has solution on an open set of desired correlations $c_{1,2},c_{2,3},c_{1,3}$. Furthermore, when they exist, solutions are isolated.
\end{thm}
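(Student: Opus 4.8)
The plan is to reduce the problem, via Theorem~\ref{thm:var_cor_tildes}, to a purely geometric question about the dominant plane and then to study the resulting correlation map. By~\eqref{eq: corr formulas noJor}, in the singular limit the achievable correlations depend only on the orthonormal dominant pair $\bsv_1,\bsv_2$, in fact only on the $2$-plane $\Pi=\R\{\bsv_1,\bsv_2\}$ they span, since~\eqref{eq: corr formulas noJor} is invariant under rotations of the frame inside $\Pi$. For $n=3$ this plane admits a clean description: the rank-$2$ orthogonal projector $P=\bsv_1\bsv_1^\top+\bsv_2\bsv_2^\top$ satisfies $I-P=\boldsymbol{n}\boldsymbol{n}^\top$ for a unit normal $\boldsymbol{n}=(n_1,n_2,n_3)$, whence $[\bsv_1]_i[\bsv_1]_j+[\bsv_2]_i[\bsv_2]_j=\delta_{ij}-n_in_j$ and~\eqref{eq: corr formulas noJor} collapses to $c_{ij}=-n_in_j/\sqrt{(1-n_i^2)(1-n_j^2)}$ for $i\neq j$. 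First I would record this identity, so that Problem~\ref{prob:main} becomes: given $(c_{12},c_{13},c_{23})$, solve these three equations for a unit vector $\boldsymbol{n}\in S^2$.

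For the isolation claim, I would introduce the smooth correlation map $F\colon S^2\to\R^3$, $F(\boldsymbol{n})=(c_{12},c_{13},c_{23})$, and show it is an immersion off a lower-dimensional exceptional set: writing $F$ in a local chart of $S^2$ and differentiating the closed form above, one verifies that the differential $dF_{\boldsymbol{n}}$ has rank $2$ for generic $\boldsymbol{n}$, the degeneracies occurring only where some $n_i\in\{0,\pm1\}$. At such a regular point $F$ is locally injective, so the preimage of any attained correlation triple is discrete; compactness of $S^2$ then makes it finite. The residual discrete symmetries, namely $\boldsymbol{n}\mapsto-\boldsymbol{n}$ (same plane, same correlations) and the component sign flips $n_i\mapsto-n_i$ (which only change the signs of the $c_{ij}$), account for the finite solution multiplicity. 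This establishes that, when solutions exist, they are isolated. I expect the rank-$2$ Jacobian check to be routine, if slightly tedious.

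The main obstacle is the openness claim, where I would attempt to invoke the inverse function theorem to produce an open ball of solvable triples in $\R^3$, and where a dimension count is decisive. The domain of $F$ is $S^2$, equivalently the dominant plane is a single point of the Grassmannian $\mathrm{Gr}(2,3)$, which is $2$-dimensional, while the target $\R^3$ is $3$-dimensional; hence $F$ can never be submersive and, by Sard's theorem, its image has measure zero in $\R^3$. The inverse-function-theorem step therefore cannot deliver a full-rank $3\times3$ differential, and the achievable triples are rigidly confined to a $2$-dimensional surface, which one can pin down by the sign constraint $c_{12}c_{13}c_{23}\le0$ (immediate, since the three numerators multiply to $-n_1^2n_2^2n_3^2$) together with a single polynomial relation obtained by eliminating $(n_1^2,n_2^2,n_3^2)$ from the three squared equations under $n_1^2+n_2^2+n_3^2=1$. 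Carrying the inverse-function argument as far as it goes shows only that $F$ is a local diffeomorphism onto an open piece of this surface wherever $dF$ has rank $2$. I would thus flag that the literal assertion of openness \emph{in $\R^3$} is the genuine sticking point and appears unattainable as stated; reconciling the theorem with this $2$-dimensional geometry, so that "open set of desired correlations" admits a correct reading, is the crux of the argument.
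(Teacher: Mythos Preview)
Your approach differs substantially from the paper's and, in fact, exposes a defect in the paper's own argument. The paper parameterizes the orthonormal pair $(\bsv_1,\bsv_2)$ by three angles $(\theta_1,\theta_2,\theta_3)$ and considers the induced map $\Phi:\R^3\to\R^3$ sending these angles to the three correlations~\eqref{eq: corr formulas noJor}; it then asserts that the Jacobian of $\Phi$ is non-singular almost everywhere, invokes the Open Mapping Theorem to conclude that the image is open in $\R^3$, and intersects three two-dimensional level sets $\mathcal M_{i,j}$ to obtain isolated solutions in the angle chart.

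Your observation that~\eqref{eq: corr formulas noJor} depends only on the projector $P=\bsv_1\bsv_1^\top+\bsv_2\bsv_2^\top$, hence only on the plane $\Pi$ and not on the orthonormal frame inside it, is correct and undercuts both steps of the paper's reasoning. The one-parameter group of rotations of $(\bsv_1,\bsv_2)$ within $\Pi$ acts nontrivially on the three-angle chart while leaving $\Phi$ invariant, so $d\Phi$ always annihilates the generator of this action and has rank at most two everywhere; the paper's ``non-singular almost everywhere'' claim therefore cannot hold, and the open-mapping conclusion does not follow. For the same reason the three level sets $\mathcal M_{i,j}$ fail to intersect transversally: their common intersection contains the full rotation orbit and is one-dimensional in the angle chart, not zero-dimensional as the paper asserts.

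Your passage to the unit normal $\boldsymbol n\in S^2$ is precisely the quotient by this redundancy; it yields the closed form $c_{ij}=-n_in_j/\sqrt{(1-n_i^2)(1-n_j^2)}$ and makes the dimension count transparent. Your conclusion that the achievable correlation triples form a two-dimensional subset of $[-1,1]^3$ (constrained, as you note, by $c_{12}c_{13}c_{23}\le 0$ together with one algebraic relation coming from $n_1^2+n_2^2+n_3^2=1$) and cannot contain an open subset of $\R^3$ is correct and stands in genuine tension with the theorem as stated. Your isolation argument on $S^2$ is sound and is the proper formulation of that half of the claim. The paper's proof does not address the rotation invariance, so the discrepancy you flag is real rather than an artifact of your route.
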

\begin{proof}
 Observe that because $\bsv_1$ and $\bsv_2$ are orthonormal, for $n=3$ they are parameterized by three angles $\theta_1,\theta_2,\theta_3$ (e.g., the two angles defining the orientation of $\bsv_1$ and the angle of $\bsv_2$ on the plan orthogonal to $\bsv_1$). Under this parameterization,~\eqref{eq: corr formulas noJor} defines a smooth map
 \begin{align*}
     \Phi:\R^3&\to\R^3\\
     (\theta_1,\theta_2,\theta_3)&\mapsto(\corr(\chi_1,\chi_2),\corr(\chi_2,\chi_3),\corr(\chi_3,\chi_1)).
 \end{align*}
 Furthermore, observe that~$\Phi$ has range $[-1,1]^3$ and therefore, for $n=3$, there exists at least one combination of desired correlations for which Problem~\ref{prob:main} has solution. It is lengthy but straightforward to show that the Jacobian of $\Phi$ is non-singular almost everywhere and therefore by the Open Mapping theorem~\cite{bartle1976} its image is open. Thus, Problem~\ref{prob:main} has a solution on an open set of desired correlations, i.e., the image of $\Phi$. Let $c_{1,2},c_{2,3},c_{1,3}$ be desired correlations for which a solution to Problem~\ref{prob:main} exist. Let $\Phi_{1,2}(\theta_1,\theta_2,\theta_3)=\corr(\chi_1,\chi_2)$ be the first component of $\Phi$. By the Implicit Function theorem applied to $\Phi_{1,2}$, there exists a two-dimensional almost-everywhere (i.e., except at possible singularities) smooth manifold $\mathcal M_{1,2}\subset\R^3$ such that if $\Phi_{1,2}(\theta_1,\theta_2,\theta_3)=c_{1,2}$, then $(\theta_1,\theta_2,\theta_3)\in\mathcal M_{1,2}$. Through the same argument, it follows that simulataneously imposing $\corr(\chi_1,\chi_2)=c_{1,2}$, $\corr(\chi_2,\chi_3)=c_{2,3}$, and $\corr(\chi_3,\chi_1)=c_{3,1}$ implies $(\theta_1,\theta_2,\theta_3)\in\mathcal M_{1,2}\cap \mathcal M_{2,3} \cap \mathcal M_{3,1}$, where $\mathcal M_{2,3}, \mathcal M_{3,1}$ are also two-dimensional almost-everywhere smooth manifolds. Recalling that the intersection of three two-dimensional manifold in $R^3$ is generically zero-dimensional, i.e, made of isolated points, the result follows.
\end{proof}

\begin{figure}
  \begin{subfigure}[b]{.21\textwidth}
    \centering
    \includegraphics[width=\linewidth]{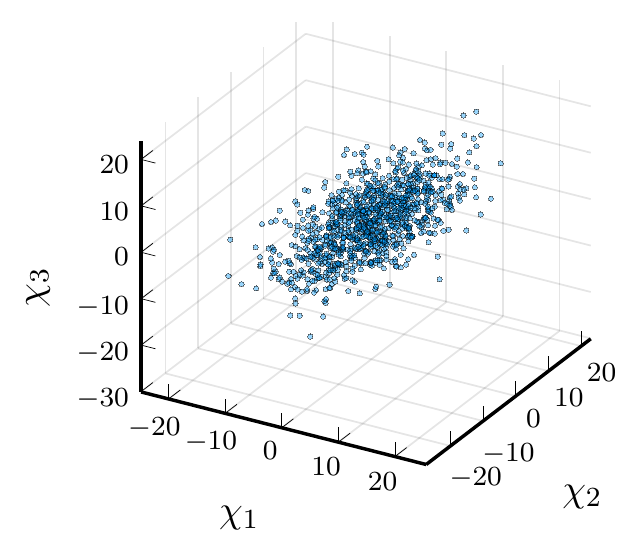}
  \end{subfigure}
  \hfill
  \begin{subfigure}[b]{.21\textwidth}
    \centering
    \includegraphics[width=\linewidth]{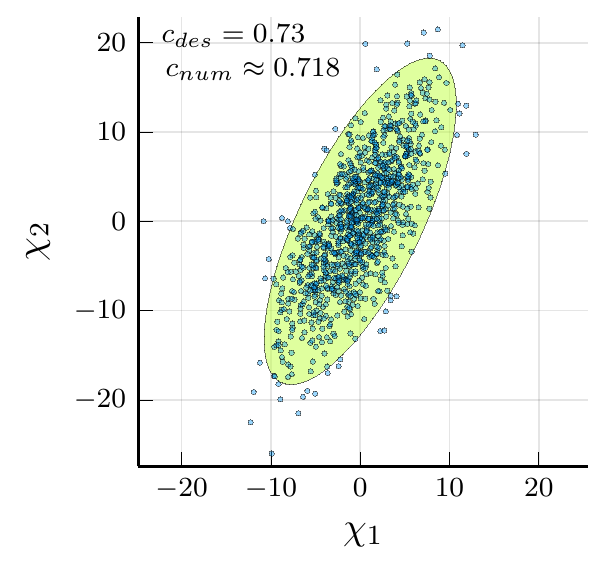}
  \end{subfigure}
  \medskip
  \begin{subfigure}[b]{.21\textwidth}
    \centering
    \includegraphics[width=\linewidth]{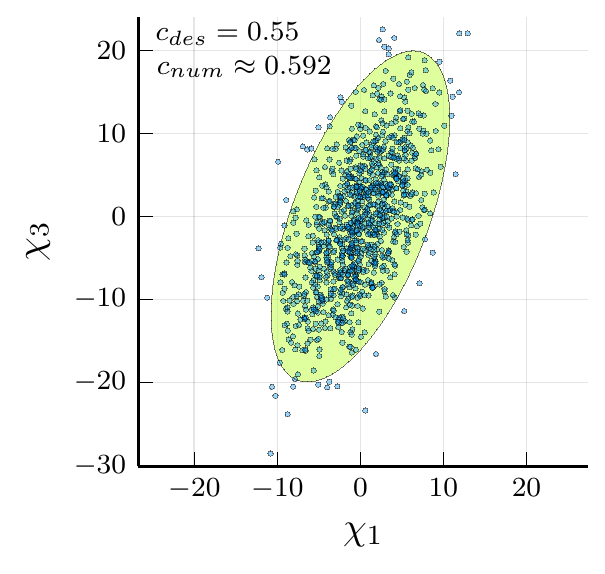}
  \end{subfigure}
  \hfill
  \begin{subfigure}[b]{.21\textwidth}
    \centering
    \includegraphics[width=\linewidth]{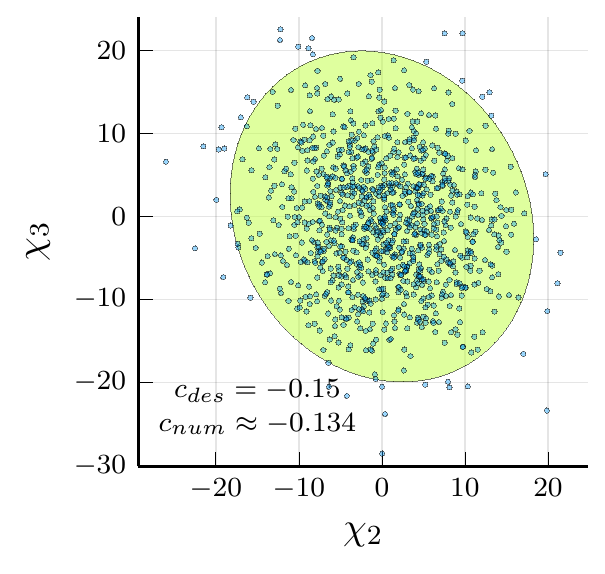}
  \end{subfigure}
\caption{Steady state solution for one thousand instances of model~\eqref{eq:main_system} in case the dominant eigenvalue of $A$ has algebraic and geometric multiplicity two. The computed correlations are close to the desired control values. Yellow ellipses are covariance ellipses.}
\label{fig: corr 2domNJ}
\end{figure}

Figure~\ref{fig: corr 2domNJ} illustrates Theorem~\ref{thm: noJor}. An orthonormal solution $\bsv_1,\bsv_2$ to Problem~\ref{prob:main} was found by plugging expression~\eqref{eq: corr formulas noJor} into the Julia Optimization package {\tt Optim.jl}~\cite{mogensen2018optim}. Given this solution, we assigned $\lambda_1=-0.01$ and $\lambda_3=-1.0$. The eigenvector $\bsv_3$ associated to $\lambda_3$ was taken to be orthonormal to $\bsv_1,\bsv_2$. The Hurwitz matrix $A$ was then constructed through the inverse of the unitary change of base associated to the base $\bsv_1,\bsv_2,\bsv_3$. We then simulated one-thousand instances of model~\eqref{eq:main_system}. As shown in Figure~\ref{fig: corr 2domNJ} numerically computed correlations are close to the associated desired control values. To verify that the solution set is indeed made of isolated points we rerun our algorithm for the same desired correlation values as Figure~\ref{fig: corr 2domNJ} but let the optimization procedure converge to a new solution for five hundred initial conditions in a small neighborhood of the solution corresponding to Figure~\ref{fig: corr 2domNJ}. The result of this experiment is reproduced in Figure~\ref{fig:nubes} (yellow plots). The optimizer consistently converges back to the original solution, which, as predicted by Theorem~\ref{thm: noJor}, shows that no other solutions exist close to it.

\subsection{The dominant eigenvalue has algebraic multiplicity two but geometric multiplicity one}

We develop this section under the following assumption.
\begin{hyp}\label{Hyp:2 dominance Jor}
$A$ has a real repeated dominant eigenvalue $\lambda_1<0$, with $\gamma_A(\lambda_1)=1$, $\mu_A(\lambda_1)=2$, dominant normalized eigenvector $\bsv_1$ and generalized normalized eigenvector $\bsv_2$. Let $\lambda_3,\ldots,\lambda_{n}$, $\Real(\lambda_i)\ll\lambda_1$, $i=3,\ldots,n$ be the remaining eigenvalues of $A$.
\end{hyp}

Let $U$, with $U_{\cdot1}=\bsv_1$ and $U_{\cdot2}=\bsv_2$, be the matrix that transforms $A$ in its Jordan canonical form $J$, i.e., $A=UJU^{-1}$, with $J=\begin{bmatrix}
        J_{\lambda_1}&0\\
        0&J_2
        \end{bmatrix}$,
where $J_{\lambda_1}$ is the two-dimensional Jordan block associated to $\lambda_1$ and $J_2$ contains Jordan blocks $\lambda_3,\ldots,\lambda_n$. Let $\tilde x=U^{-1}x$. Then
\begin{equation}
\label{eq:change_coor_21}
\begin{split}
    \dot{\tilde{x}}_1&=\lambda_1\tilde{x}_1+\tilde{x}_2+\langle U_{1\cdot}^{-1},[\xi_j]_{j=1}^n\rangle,\\
    \dot{\tilde{x}}_2&=\lambda_1\tilde{x}_2+\langle U_{2\cdot}^{-1},[\xi_j]_{j=1}^n\rangle,\\
    [\dot{\tilde{x}}_i]_{i=3}^n&=J_2[\tilde{x}_j]_{j=3}^n+[\langle U_{i\cdot}^{-1},[\xi_j]_{j=1}^n\rangle]_{i=3}^n.
\end{split}
\end{equation}

The proof of the following theorem follows along the same line as the proof of Theorem~\ref{thm: noJor slow dyn}.

\begin{thm}\label{thm: Jor slow dyn}
Under Assumption~\ref{Hyp:2 dominance Jor}, the slow dynamics of model~\eqref{eq:main_system} associated to the singular limit $\varepsilon=\frac{-1}{\max_{i>2}\{\Real(\lambda_i)\}}\to0$ and fixed $\lambda_1$ of model~\eqref{eq: 2 domtilde sys} reads
\begin{subequations}\label{eq:reduced22 jor}
\begin{align*}
    \dot{\tilde{x}}_1&=\lambda_1\tilde{x}_1+\tilde{x}_2+\langle U_{1\cdot}^{-1},[\xi_j]_{j=1}^n\rangle,\\
    \dot{\tilde{x}}_2&=\lambda_1\tilde{x}_2+\langle U_{2\cdot}^{-1},[\xi_j]_{j=1}^n\rangle,\\
    {\tilde{x}}_i&=0,\quad i=1,\ldots,n\,.
\end{align*}
\end{subequations}
Furthermore, the associated critical manifold $\mathcal M=\{\tilde x_3=\cdots=\tilde x_n=0\}$ is exponentially attractive.
\end{thm}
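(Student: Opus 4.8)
The plan is to recast system~\eqref{eq:change_coor_21} into standard slow--fast form and invoke singular perturbation theory (Tikhonov/Fenichel), exactly mirroring the proof of Theorem~\ref{thm: noJor slow dyn}.

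First I would rescale the fast block. Set $B_1=\varepsilon J_2$, so that the spectrum of $B_1$ is $\{\varepsilon\lambda_3,\ldots,\varepsilon\lambda_n\}$ and, by the very definition of $\varepsilon$, $\varepsilon\max_{i>2}\{\Real(\lambda_i)\}=-1$; hence $B_1$ is Hurwitz with spectrum bounded away from the imaginary axis, uniformly in $\varepsilon$ for all sufficiently small $\varepsilon>0$. Multiplying the third block of~\eqref{eq:change_coor_21} by $\varepsilon$ and substituting $B_1$ brings the system to the form
\begin{align*}
\dot{\tilde{x}}_1 &=\lambda_1\tilde{x}_1+\tilde{x}_2+\langle U_{1\cdot}^{-1},[\xi_j]_{j=1}^n\rangle,\\
\dot{\tilde{x}}_2 &=\lambda_1\tilde{x}_2+\langle U_{2\cdot}^{-1},[\xi_j]_{j=1}^n\rangle,\\
\varepsilon[\dot{\tilde{x}}_i]_{i=3}^n &=B_1[\tilde{x}_j]_{j=3}^n+\varepsilon[\langle U_{i\cdot}^{-1},[\xi_j]_{j=1}^n\rangle]_{i=3}^n,
\end{align*}
which is the classical singularly perturbed form with slow variables $(\tilde{x}_1,\tilde{x}_2)$ and fast variables $(\tilde{x}_3,\ldots,\tilde{x}_n)$.

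The substantive point --- and the only place where this proof differs from that of Theorem~\ref{thm: noJor slow dyn} --- is that here $A$ is put in (block-diagonal) Jordan form, so the slow block carries the off-diagonal Jordan coupling $+\tilde{x}_2$ in the first equation but receives \emph{no} feedback from the fast variables: there is no analogue of the coupling matrix $C$ of Lemma~\ref{lem: unitary No Jor}. Consequently the layer (fast) equation $[\dot{\tilde{x}}_i]_{i=3}^n=B_1[\tilde{x}_j]_{j=3}^n$ has the critical manifold $\mathcal M=\{\tilde{x}_3=\cdots=\tilde{x}_n=0\}$ as a globally exponentially stable equilibrium, uniformly in the slow variables and in $\varepsilon$ thanks to the spectral bound on $B_1$. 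By Tikhonov's theorem / Fenichel geometric singular perturbation theory, $\mathcal M$ is normally hyperbolic and exponentially attractive, and the reduced dynamics on it is obtained by setting $\varepsilon=0$ (which forces $\tilde{x}_3=\cdots=\tilde{x}_n=0$ since $B_1$ is invertible) and restricting the slow equations to $\mathcal M$; this yields precisely~\eqref{eq:reduced22 jor}.

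I do not expect a genuine obstacle: the argument is a routine invocation of singular perturbation theory and every hypothesis it requires is immediate from Assumption~\ref{Hyp:2 dominance Jor}. The only points deserving a word of care are (i) that, when some $\lambda_i$ are complex, the relevant quantity is $\Real(\lambda_i)$, so one must verify that $B_1$ is Hurwitz rather than merely that its real eigenvalues are negative; and (ii) that the random terms $\langle U_{i\cdot}^{-1},[\xi_j]_{j=1}^n\rangle$ are $\varepsilon$-independent and enter the fast block multiplied by $\varepsilon$, hence neither shift the critical manifold nor survive the limit at leading order in the fast block --- exactly as in the proof of Theorem~\ref{thm: noJor slow dyn}, the analysis being carried out pathwise for each realization of $\bsxi$.
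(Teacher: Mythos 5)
Your proposal is correct and follows essentially the same route as the paper, which for this theorem simply states that the proof ``follows along the same line as the proof of Theorem~\ref{thm: noJor slow dyn}'': you rescale the fast block via $B_1=\varepsilon J_2$, put the system in standard singularly perturbed form, and invoke Tikhonov/Fenichel, correctly noting that the Jordan block decoupling removes the fast-to-slow coupling matrix $C$ present in the earlier case. No gaps; your observation that the reduced slow equations should read $\tilde{x}_i=0$ only for $i=3,\ldots,n$ (rather than $i=1,\ldots,n$ as typeset in the statement) is also the correct reading.
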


Let $\tilde\bschi=U^{-1}\chi$. We are now in condition to compute steady-state correlations of model~\eqref{eq:main_system}.

\begin{thm}
\label{thm:var_cor_tildes Jor}
Under Assumption~\ref{Hyp:2 dominance Jor} and in the singular limit $\varepsilon=\frac{-1}{\max_{i>2}\{\Real(\lambda_i)\}}\to0$ and fixed $\lambda_1$, equilibria of model~\eqref{eq:main_system} satisfy
\begin{align*}
        &\var({\tilde{\chi}_1})\!=\!\sum_{i=1}^n\!\bigg(\!\frac{-\lambda_1U_{1i}^{-1}\!+\!U_{2i}^{-1}}{\lambda_1^2}\bigg)^2\!\!\!,\var({\tilde{\chi}_2})\!=\!\sum_{i=1}^n\!\bigg(\!\!-\frac{U_{2i}^{-1}}{\lambda_1}\bigg)^2\\
        &\cov({\tilde{\chi}_1},{\tilde{\chi}_2})=-\frac{1}{\lambda_1^3}\sum_{i=1}^n(-\lambda_1U_{1i}^{-1}+U_{2i}^-1)(U_{2i}^{-1}).
\end{align*}
Furthermore, for all $i,j\in\{1,\ldots,n\}$,
\begin{equation}\label{eq: corr formulas Jor}
    \corr(\chi_i,\chi_j)=\frac{\beta_1}{\sqrt{\beta_2\cdot \beta_3}}
\end{equation}
where $\beta_1=[\bsv_1]_i[\bsv_1]_j\var(\tilde{\chi}_1)+([\bsv_1]_i[\bsv_2]_j+[\bsv_2]_i[\bsv_1]_j)\cov(\tilde{\chi}_1,\tilde{\chi}_2)+[\bsv_2]_i[\bsv_2]_j\var(\tilde{\chi}_2)$, $\beta_2=[\bsv_1]_i^2\var(\tilde{\chi}_1)+[\bsv_2]_i^2\var(\tilde{\chi}_2)+2[\bsv_1]_i[\bsv_2]_i\cov(\tilde{\chi}_1,\tilde{\chi}_2)$, $\beta_3=[\bsv_1]_j^2\var(\tilde{\chi}_1)+[\bsv_2]_j^2\var(\tilde{\chi}_2)+2[\bsv_1]_j[\bsv_2]_j\cov(\tilde{\chi}_1,\tilde{\chi}_2)$.
     \end{thm}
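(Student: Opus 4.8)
The plan is to follow the same three-step strategy as in the proof of Theorem~\ref{thm:var_cor_tildes}, the only new ingredient being that the slow subsystem now carries a nontrivial Jordan block coupling $\tilde x_1$ and $\tilde x_2$, which breaks the symmetry enjoyed in the diagonalizable case. First I would invoke Theorem~\ref{thm: Jor slow dyn} to replace, in the limit $\varepsilon\to0$, model~\eqref{eq:main_system} by its slow dynamics~\eqref{eq:reduced22 jor} on the exponentially attractive critical manifold $\mathcal M$, and then solve for the random equilibrium of this planar system. From the second equation one reads $\tilde\chi_2=-\frac{1}{\lambda_1}\langle U^{-1}_{2\cdot},[\xi_j]_{j=1}^n\rangle$; substituting into the first and solving for $\tilde\chi_1$ gives $\tilde\chi_1=\frac{1}{\lambda_1^2}\langle -\lambda_1 U^{-1}_{1\cdot}+U^{-1}_{2\cdot},[\xi_j]_{j=1}^n\rangle$. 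Both $\tilde\chi_1$ and $\tilde\chi_2$ are thus explicit linear combinations of the independent unit-variance inputs $\xi_j$.

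Second, I would compute $\var(\tilde\chi_1)$, $\var(\tilde\chi_2)$ and $\cov(\tilde\chi_1,\tilde\chi_2)$ by expanding these linear combinations through bilinearity of the covariance function and using $\cov(\xi_i,\xi_j)=\delta_{ij}$; this immediately produces the three displayed sums. Third, since $\tilde\chi_i=0$ for $i\ge3$ on $\mathcal M$ and $U_{\cdot1}=\bsv_1$, $U_{\cdot2}=\bsv_2$, the original equilibrium is recovered in the limit as $\chi=U\tilde\chi$, i.e. $\chi_i=[\bsv_1]_i\tilde\chi_1+[\bsv_2]_i\tilde\chi_2$. Expanding $\cov(\chi_i,\chi_j)$ and $\var(\chi_i),\var(\chi_j)$ once more by bilinearity in terms of $\var(\tilde\chi_1),\var(\tilde\chi_2),\cov(\tilde\chi_1,\tilde\chi_2)$ yields exactly the quantities $\beta_1$, $\beta_2$, $\beta_3$ of the statement, and dividing gives formula~\eqref{eq: corr formulas Jor}.

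The computation is entirely routine; there is no genuine obstacle beyond bookkeeping. The one point that requires care—and where the argument departs from that of Theorem~\ref{thm:var_cor_tildes}—is that $U$ is no longer unitary, so $U^{-1}_{j\cdot}$ is unrelated to $\bsv_j$: the variances $\var(\tilde\chi_1),\var(\tilde\chi_2)$ neither coincide nor reduce to $\lambda_1^{-2}$, and $\cov(\tilde\chi_1,\tilde\chi_2)$ need not vanish. Consequently the dominant-subspace covariance ``circle'' of the diagonalizable case becomes a genuine (generically non-degenerate) ellipse, which is precisely what makes the resulting correlation formula~\eqref{eq: corr formulas Jor} more involved than~\eqref{eq: corr formulas noJor}. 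One should also note in passing that $\corr(\chi_i,\chi_j)$ is well defined only when $\beta_2,\beta_3>0$, which holds generically in the data $\bsv_1,\bsv_2,\lambda_1$.
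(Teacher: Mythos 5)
Your proposal is correct and follows essentially the same route as the paper: invoke the slow reduction of Theorem~\ref{thm: Jor slow dyn}, solve the Jordan-coupled planar system for $\tilde\chi_1,\tilde\chi_2$ as explicit linear combinations of the $\xi_j$, apply bilinearity of the covariance with $\cov(\xi_i,\xi_j)=\delta_{ij}$, and then expand $\chi_i=[\bsv_1]_i\tilde\chi_1+[\bsv_2]_i\tilde\chi_2$ to obtain $\beta_1,\beta_2,\beta_3$. Your added remark that $U$ is no longer unitary--so the dominant-subspace covariance is an ellipse rather than a circle--is an accurate observation the paper makes only implicitly.
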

\begin{proof}
 The first part of the statement follows by solving for the steady-states of the slow dynamics~\eqref{eq:reduced22 jor} and using properties of the covariance function along the same lines as the proof of Theorem~\ref{thm:var_cor_tildes}. The formulas for $\corr(\chi_i,\chi_j)$ follow by invoking Theorem~\ref{thm: Jor slow dyn}, which implies that in the limit $\varepsilon\to0$, $\chi_i=[\bsv_1]_i\tilde\chi_1+[\bsv_2]_i\tilde\chi_2$.
\end{proof}

To study the geometry of the solution set of Problem~\ref{prob:main} under Assumption~\ref{Hyp:2 dominance Jor} and for $n=3$, we follow the same steps as for Theorem~\ref{thm: noJor}. The key point is observing that in solving $\corr(\chi_i,\chi_j)=c_{i,j}$, with $\corr(\chi_i,\chi_j)$ defined by~\eqref{eq: corr formulas Jor}, for $n=3$ there are five independent variables, that is, the two solid angles $((\theta_1,\theta_2),(\theta_3,\theta_4))$, which define the two normalized vectors $\bsv_1,\bsv_2$, and the slow eigenvalue $\lambda_1$. Hence, imposing values for the three correlations leads to a solution set given by the intersection of three four-dimensional manifolds in $\R^5$, i.e., in general, a two-dimensional manifold.

\begin{thm}\label{thm: Jor}
Let Assumption~\ref{Hyp:2 dominance Jor} be satisfied and let $n=3$. Then, in the limit $\varepsilon=\frac{-1}{\max_{i>1}\{\Real(\lambda_i)\}}\to0$ and fixed $\lambda_1$, Problem~\ref{prob:main} has solution on an open set of desired correlations $c_{1,2},c_{2,3},c_{1,3}$. Furthermore, when they exist, solutions lies on a two-dimensional manifold.
\end{thm}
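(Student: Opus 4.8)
The plan is to follow the same template as the proof of Theorem~\ref{thm: noJor}, now with a five-dimensional parameter space rather than a three-dimensional one. First I would turn the right-hand side of~\eqref{eq: corr formulas Jor} into an explicit smooth function of five real parameters. Given a unit eigenvector $\bsv_1$ (two angles $\theta_1,\theta_2$) and a unit generalized eigenvector $\bsv_2$ (two angles $\theta_3,\theta_4$) that are linearly independent, fix the remaining eigenvector $\bsv_3$ by a smooth rule (e.g. $\bsv_3$ a unit normal to $\R\{\bsv_1,\bsv_2\}$, as in the construction behind the figures) and fix a non-dominant eigenvalue $\lambda_3\ll\lambda_1<0$; then $A=UJU^{-1}$, with $U=[\bsv_1\mid\bsv_2\mid\bsv_3]$ and $J$ the corresponding Jordan form (with any nonzero super-diagonal entry in the dominant block, which can be used to reconcile the normalizations of $\bsv_1$ and $\bsv_2$), is Hurwitz and satisfies Assumption~\ref{Hyp:2 dominance Jor}. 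Consequently $U$, $U^{-1}$, its rows $U^{-1}_{1\cdot},U^{-1}_{2\cdot}$, the quantities $\var(\tilde\chi_1),\var(\tilde\chi_2),\cov(\tilde\chi_1,\tilde\chi_2)$ of Theorem~\ref{thm:var_cor_tildes Jor}, and hence $\corr(\chi_i,\chi_j)$ from~\eqref{eq: corr formulas Jor}, are all smooth functions of $(\theta_1,\theta_2,\theta_3,\theta_4,\lambda_1)$ on the open set $\mathcal P\subset\R^5$ where $\bsv_1,\bsv_2$ are independent. This yields a smooth map
\[
 \Phi:\mathcal P\to[-1,1]^3,\qquad (\theta_1,\theta_2,\theta_3,\theta_4,\lambda_1)\mapsto\big(\corr(\chi_1,\chi_2),\corr(\chi_2,\chi_3),\corr(\chi_3,\chi_1)\big).
\]

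Next I would prove that the set of achievable correlations is open. It suffices to exhibit one point $p^\star\in\mathcal P$ at which the $3\times 5$ Jacobian $D\Phi_{p^\star}$ has rank $3$; then $\Phi$ is a submersion near $p^\star$ and maps every neighborhood of $p^\star$ onto a neighborhood of $\Phi(p^\star)$ (equivalently, restrict $\Phi$ to a three-dimensional slice through $p^\star$ transverse to $\ker D\Phi_{p^\star}$ and apply the Open Mapping theorem~\cite{bartle1976}). Hence the image of $\Phi$ — the set of desired correlations $c_{1,2},c_{2,3},c_{1,3}$ for which Problem~\ref{prob:main} is solvable — has nonempty interior. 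As in Theorem~\ref{thm: noJor}, one expects $D\Phi$ to have full rank almost everywhere on $\mathcal P$, and I would record this, since it is exactly what the dimension count needs.

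For the last claim, fix $c=(c_{1,2},c_{2,3},c_{1,3})$ in the interior of $\mathrm{Image}(\Phi)$ and pick $p\in\Phi^{-1}(c)$ with $D\Phi_p$ surjective (generic by the previous step). By the regular value (preimage) theorem — or, mirroring the argument in Theorem~\ref{thm: noJor}, by applying the Implicit Function theorem in turn to the three scalar equations $\Phi_{1,2}=c_{1,2}$, $\Phi_{2,3}=c_{2,3}$, $\Phi_{3,1}=c_{3,1}$, each of which cuts out an almost-everywhere smooth four-dimensional hypersurface $\mathcal M_{i,j}\subset\R^5$ — the solution set $\Phi^{-1}(c)=\mathcal M_{1,2}\cap\mathcal M_{2,3}\cap\mathcal M_{3,1}$ is, near $p$, a smooth manifold of dimension $5-3=2$. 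Thus, when they exist, solutions lie on a two-dimensional manifold.

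The one non-routine step is verifying the rank of $D\Phi$. Because $U^{-1}_{1\cdot},U^{-1}_{2\cdot}$ depend on $\bsv_1,\bsv_2$ through matrix inversion, and $\beta_1,\beta_2,\beta_3$ in~\eqref{eq: corr formulas Jor} combine these nonlinearly with $\lambda_1$, a symbolic determinant is unwieldy; the practical route, exactly as for the "lengthy but straightforward" step in the proof of Theorem~\ref{thm: noJor}, is to evaluate $D\Phi$ at a convenient $p^\star$ (say $\bsv_1,\bsv_2$ close to an orthonormal pair and $\lambda_1$ moderate) and check that some $3\times 3$ minor is nonzero there. One must also keep the chosen $\bsv_3$ and $\lambda_3$ such that $A$ stays Hurwitz with $\lambda_1$ strictly dominant on a neighborhood of $p^\star$, so that the singular-limit reduction of Theorem~\ref{thm: Jor slow dyn} and formula~\eqref{eq: corr formulas Jor} remain valid throughout.
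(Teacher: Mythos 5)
Your proposal follows essentially the same route as the paper, which proves Theorem~\ref{thm: Jor} by the dimension-counting argument sketched in the paragraph preceding it: parameterize the problem by the five variables $(\theta_1,\theta_2,\theta_3,\theta_4,\lambda_1)$, reuse the openness argument of Theorem~\ref{thm: noJor}, and realize the solution set as the intersection of three four-dimensional manifolds in $\R^5$, hence generically two-dimensional. Your version is more explicit about the construction of $A$, the smoothness of the map, and the need to verify the Jacobian rank, but the underlying argument is the same.
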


\begin{figure}
  \begin{subfigure}[b]{.21\textwidth}
    \centering
    \includegraphics[width=\linewidth]{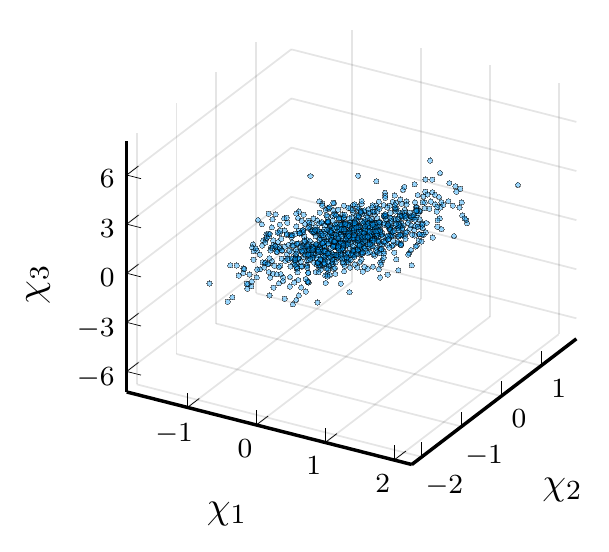}
  \end{subfigure}
  \hfill
  \begin{subfigure}[b]{.21\textwidth}
    \centering
    \includegraphics[width=\linewidth]{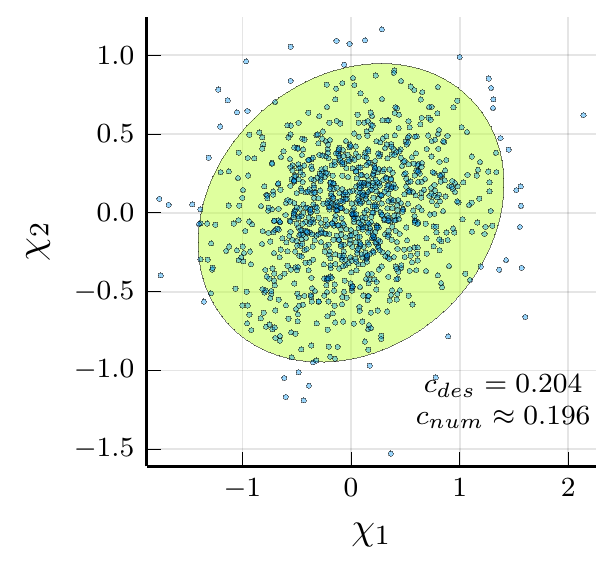}
  \end{subfigure}
  \medskip
  \begin{subfigure}[b]{.21\textwidth}
    \centering
    \includegraphics[width=\linewidth]{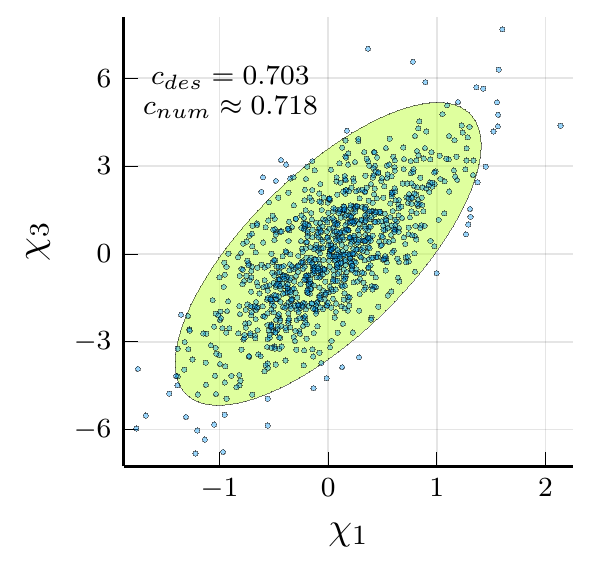}
  \end{subfigure}
  \hfill
  \begin{subfigure}[b]{.21\textwidth}
    \centering
    \includegraphics[width=\linewidth]{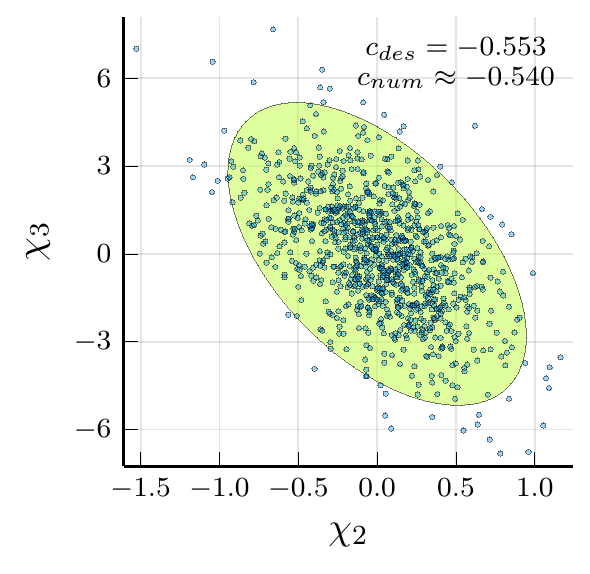}
  \end{subfigure}
\caption{Steady state solution for one thousand instances of model~\eqref{eq:main_system} in case the dominant eigenvalue of $A$ has algebraic multiplicity two but geometric multiplicity one. The computed correlations are close to the desired control values. Yellow ellipses are covariance ellipses.}
\label{fig: corr 2domJ}
\end{figure}

Figure~\ref{fig: corr 2domJ} illustrates Theorem~\ref{thm: noJor}. A solution to Problem~\ref{prob:main} was found by plugging expression~\eqref{eq: corr formulas Jor} into the Julia Optimization package {\tt Optim.jl}~\cite{mogensen2018optim}. In our algorithm, the third eigenvector $\bsv_3$ of matrix $A$, which is needed to define the matrix $U$ and its inverse, was defined as the unitary norm vector orthogonal to $\R\{\bsv_1,\bsv_2\}$.\footnote{Other choices are of course possible (e.g., keeping $\bsv_3$ fixed and equal to an arbitrarily chosen normalized vector).} The solution of our optimization procedure is the pair $\bsv_1,\bsv_2$ (and hence the orthonormal vector $\bsv_3$) and the slow eigenvalue $\lambda_1$. Given this solution, we assigned $\lambda_3=100\lambda_1$. The Hurwitz matrix $A$ was then constructed through the inverse of the change of base associated to the base $\bsv_1,\bsv_2,\bsv_3$. We then simulated one-thousand instances of model~\eqref{eq:main_system}. As shown in Figure~\ref{fig: corr 2domJ} numerically computed correlations are close to the associated desired control values. To verify that the solution set indeed lies on a two-dimensional manifold, we rerun our algorithm for the same desired correlation values as Figure~\ref{fig: corr 2domJ} but letting the optimization procedure converge to a new solution for five hundred initial conditions in a small neighborhood of the solution corresponding to Figure~\ref{fig: corr 2domJ}. The result of this experiment is reproduced in Figure~\ref{fig:nubes} (blue plots). The optimizer converged to disparate solutions. The dimension of the resulting set of solution was approximated via Principal Component Analysis using the package Julia {\tt MultivariateStats.jl}. As shown in Figure~\ref{fig:compara_vars}, only two dimensions consistently capture almost 100\% of the variance of the solution set, confirming Theorem~\ref{thm: Jor}.

\begin{figure}
    \centering
    \includegraphics[width=0.48\textwidth]{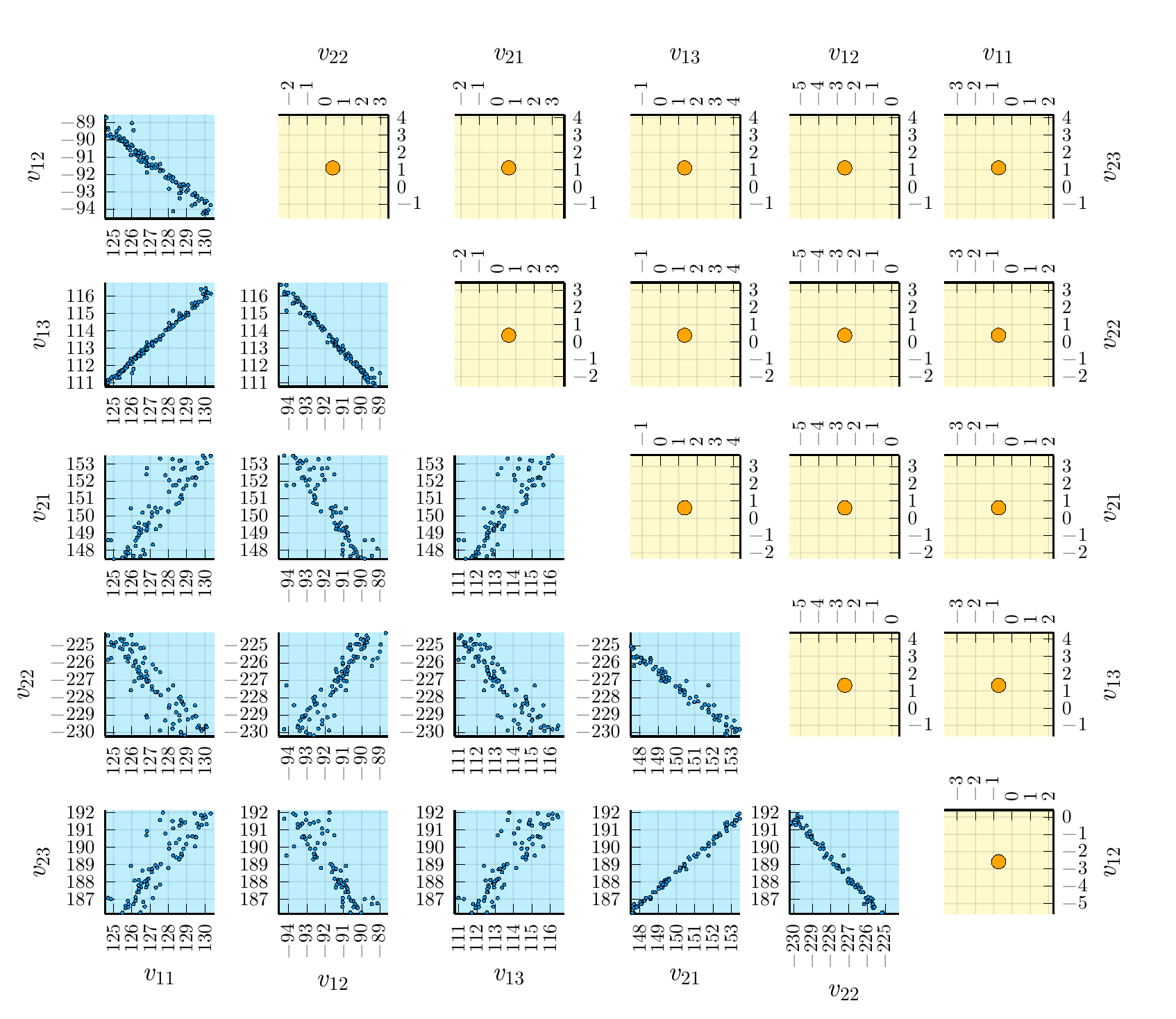}
    \caption{Projection of the components of the solution vectors $\bsv_1$ and $\bsv_2$ to Problem~\ref{prob:main} for the same cases as Figure~\ref{fig: corr 2domNJ} (yellow) and Figure~\ref{fig: corr 2domJ} (blue) when the optimization algorithm is initialized close to the solution of the respective figure.
    }
    \label{fig:nubes}
\end{figure}

\begin{figure}
    \centering
    \includegraphics[width=0.4\textwidth]{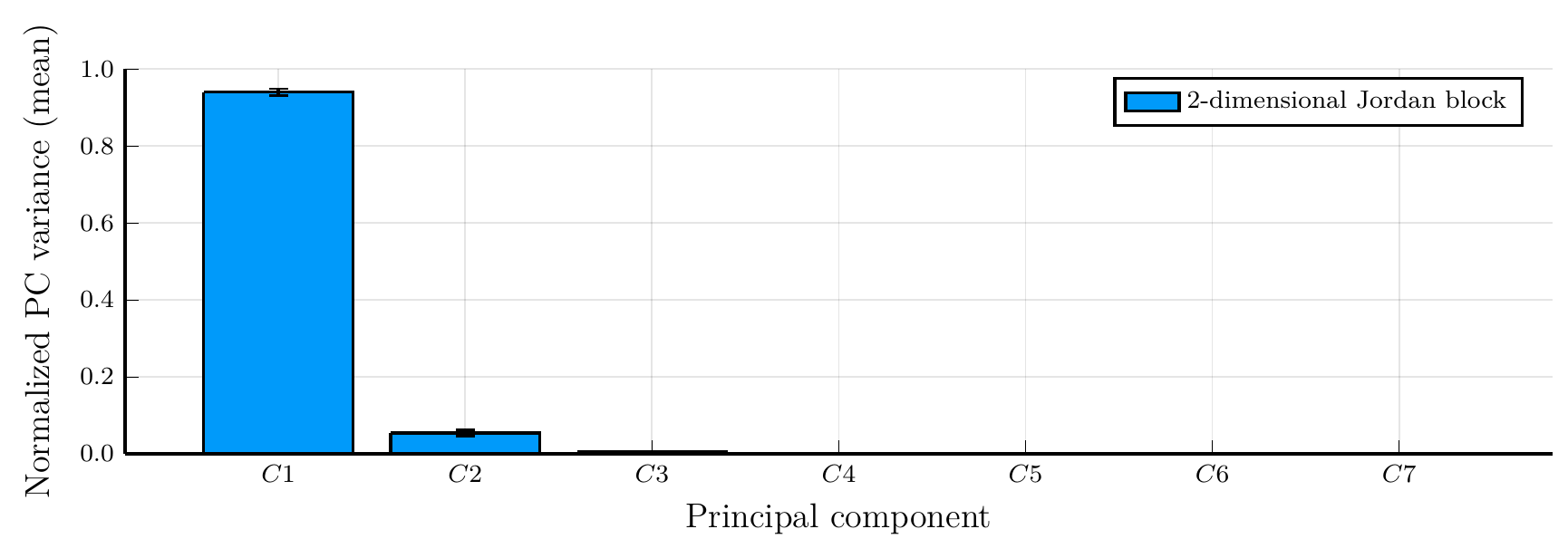}
    \caption{Principal component analysis (variance captured by the various components) of the solution set in Figure~\ref{fig:nubes}.}
    \label{fig:compara_vars}
\end{figure}

\section{Discussion}

Correlated variability is a fundamental property of biological neural systems but our understanding of it is still very poor. We introduced a control theoretical control problem that might lead to new insights about the functions and origins of correlated variability. The solution to this problem we started to sketch already revealed clear geometric connections between correlated variability and network structure. From a purely mathematical perspective, our work opens the question of which network structure implies two-dimensional dominant dynamics with geometric multiplicity one or two, in the same way as Metzler or eventually positive interconnection matrices lead to one-dominant dynamics.

\addtolength{\textheight}{-12cm}   

\bibliographystyle{IEEEtrans}
\bibliography{ACC2021}

\begin{thebibliography}{10}
\providecommand{\url}[1]{#1}
\csname url@rmstyle\endcsname
\providecommand{\newblock}{\relax}
\providecommand{\bibinfo}[2]{#2}
\providecommand\BIBentrySTDinterwordspacing{\spaceskip=0pt\relax}
\providecommand\BIBentryALTinterwordstretchfactor{4}
\providecommand\BIBentryALTinterwordspacing{\spaceskip=\fontdimen2\font plus
\BIBentryALTinterwordstretchfactor\fontdimen3\font minus
  \fontdimen4\font\relax}
\providecommand\BIBforeignlanguage[2]{{%
\expandafter\ifx\csname l@#1\endcsname\relax
\typeout{** WARNING: IEEEtran.bst: No hyphenation pattern has been}%
\typeout{** loaded for the language `#1'. Using the pattern for}%
\typeout{** the default language instead.}%
\else
\language=\csname l@#1\endcsname
\fi
#2}}

\bibitem{abbott1999effect}
L.~F. Abbott and P.~Dayan, ``The effect of correlated variability on the
  accuracy of a population code,'' \emph{Neural computation}, vol.~11, no.~1,
  pp. 91--101, 1999.

\bibitem{angeli2003monotone}
D.~Angeli and E.~D. Sontag, ``Monotone control systems,'' \emph{IEEE
  Transactions on automatic control}, vol.~48, no.~10, pp. 1684--1698, 2003.

\bibitem{averbeck2006neural}
B.~B. Averbeck, P.~E. Latham, and A.~Pouget, ``Neural correlations, population
  coding and computation,'' \emph{Nature reviews neuroscience}, vol.~7, no.~5,
  pp. 358--366, 2006.

\bibitem{bartle1976}
R.~Bartle, \emph{The Elements of Real Analysis}.\hskip 1em plus 0.5em minus
  0.4em\relax Wiley, 1976.

\bibitem{durrett2019probability}
R.~Durrett, \emph{Probability: theory and examples}.\hskip 1em plus 0.5em minus
  0.4em\relax Cambridge university press, 2019, vol.~49.

\bibitem{franci2020positive}
A.~Franci, T.~O’Leary, and J.~Golowasch, ``Positive dynamical networks in
  neuronal regulation: How tunable variability coexists with robustness,''
  \emph{IEEE Control Systems Letters}, vol.~4, no.~4, pp. 946--951, 2020.

\bibitem{giordano2017interaction}
G.~Giordano and C.~Altafini, ``Interaction sign patterns in biological
  networks: from qualitative to quantitative criteria,'' in \emph{2017 IEEE
  56th Annual Conference on Decision and Control (CDC)}.\hskip 1em plus 0.5em
  minus 0.4em\relax IEEE, 2017, pp. 5348--5353.

\bibitem{kohn2016correlations}
A.~Kohn, R.~Coen-Cagli, I.~Kanitscheider, and A.~Pouget, ``Correlations and
  neuronal population information,'' \emph{Annual review of neuroscience},
  vol.~39, pp. 237--256, 2016.

\bibitem{leavitt2017correlated}
M.~L. Leavitt, F.~Pieper, A.~J. Sachs, and J.~C. Martinez-Trujillo,
  ``Correlated variability modifies working memory fidelity in primate
  prefrontal neuronal ensembles,'' \emph{Proceedings of the National Academy of
  Sciences}, vol. 114, no.~12, pp. E2494--E2503, 2017.

\bibitem{mogensen2018optim}
P.~K. Mogensen and A.~N. Riseth, ``Optim: A mathematical optimization package
  for {Julia},'' \emph{Journal of Open Source Software}, vol.~3, no.~24, p.
  615, 2018.

\bibitem{pernice2018interpretation}
V.~Pernice and R.~A. da~Silveira, ``Interpretation of correlated neural
  variability from models of feed-forward and recurrent circuits,'' \emph{PLoS
  computational biology}, vol.~14, no.~2, p. e1005979, 2018.

\bibitem{schulz2007quantitative}
D.~J. Schulz, J.-M. Goaillard, and E.~E. Marder, ``Quantitative expression
  profiling of identified neurons reveals cell-specific constraints on highly
  variable levels of gene expression,'' \emph{Proceedings of the National
  Academy of Sciences}, vol. 104, no.~32, pp. 13\,187--13\,191, 2007.

\bibitem{tran2019ionic}
T.~Tran, C.~T. Unal, D.~Severin, L.~Zaborszky, H.~G. Rotstein, A.~Kirkwood, and
  J.~Golowasch, ``Ionic current correlations are ubiquitous across phyla,''
  \emph{Scientific reports}, vol.~9, no.~1, pp. 1--9, 2019.

\end{thebibliography}

\end{document}